\newtheoremstyle{plain}
  {}   				
  {}   				
  {\itshape}  
  {}       		
  {\mdseries\scshape} 
  {.}         
  { } 				
  {\thmname{#1}\thmnumber{ #2}\ifx#3\empty\else\ (#3)\fi}
\theoremstyle{plain}
\newtheorem{theorem}{\underline{Theorem}}
\newtheorem{lemma}[theorem]{\underline{Lemma}}       	
\newtheorem{proposition}[theorem]{\underline{Proposition}}
\newtheoremstyle{definition}
  {}   				
  {}   				
  {}  				
  {}      		
  {\mdseries\scshape} 
  {.}         
  { } 				
  {\thmname{#1}\thmnumber{ #2}\ifx#3\empty\else\ (#3)\fi}
\theoremstyle{definition}
\newtheorem{remark}[theorem]{\underline{Remark}}
\newtheorem{assumption}[theorem]{\underline{Assumption}}
\newcommand{\<}{\langle}
\renewcommand{\>}{\rangle}
\renewcommand{\(}{\left(}
\renewcommand{\)}{\right)}
\renewcommand{\[}{\left[}
\newcommand\Eb{\mathds{E}}
\newcommand\Fb{\mathds{F}}
\newcommand\Pb{\mathds{P}}
\newcommand\Rb{\mathds{R}}
\newcommand\Nb{\mathds{N}}
\newcommand\Ac{\mathscr{A}}
\newcommand\Fc{\mathscr{F}}
\newcommand\Om{\Omega}
\newcommand\sig{\sigma}
\newcommand\Gam{\Gamma}
\newcommand\lam{\lambda}
\newcommand\del{\delta}
\renewcommand\phi{\varphi}
\newcommand\Tb{\overline{T}}
\newcommand\Pbt{\widetilde{\Pb}}
\newcommand\Act{\widetilde{\Ac}}
\newcommand\mut{\widetilde{\mu}}
\newcommand\mt{\widetilde{m}}
\newcommand\st{\widetilde{s}}
\newcommand\St{\widetilde{S}}
\newcommand\Wt{\widetilde{W}}
\newcommand\Gamt{\widetilde{\Gam}}
\renewcommand\d{\partial}
\newcommand\ii{\mathtt{i}}
\newcommand\dd{\mathrm{d}}
\newcommand\ee{\mathrm{e}}
\newcommand{\Ib}[1]{\mathds{1}_{\{#1\}}}
\begin{document}

\title{Short-rate Derivatives in a Higher-for-Longer Environment}

\author{
Aram Karakhanyan
\thanks{Department of Mathematics, University of Edinburgh.  \textbf{e-mail}: \url{aram6k@gmail.com}}
\and
Takis Konstantopoulos
\thanks{Department of Mathematics, Uppsala University.  \textbf{e-mail}: \url{takiskonst@gmail.com}}
\and
Matthew Lorig
\thanks{Department of Applied Mathematics, University of Washington.  \textbf{e-mail}: \url{mlorig@uw.edu}}
\and 
Evgenii Samutichev
\thanks{Department of Applied Mathematics, University of Washington.
\textbf{e-mail}: \url{e.samutichev@gmail.com}}
}

\date{This version: \today}

\maketitle

\begin{abstract}
We introduce a class of short-rate models that exhibit a 	``higher for longer'' phenomenon.  Specifically, the short-rate is modeled as a general time-homogeneous one-factor Markov diffusion on a finite interval.  The lower endpoint is assumed to be regular, exit or natural according to boundary classification while the upper endpoint is assumed to be regular with absorbing behavior.  In this setting, we give an explicit expression for price of a zero-coupon bond (as well as more general interest rate derivatives) in terms of the transition density of the short-rate under a new probability measure, and the solution of a non-linear ordinary differential equation (ODE).  We then narrow our focus to a class of models for which the transition density and ODE can be solved explicitly.  For models within this class, we provide conditions under which the lower endpoint is regular, exit and natural.  Finally, we study two specific models -- one in which the lower endpoint is exit and another in which the lower endpoint is natural.  In these two models, we give an explicit solution of transition density of the short-rate as a (generalized) eigenfunction expansion.  We provide plots of the transition density, (generalized) eigenfunctions, bond prices and the associated yield curve.
\end{abstract}

%
%

\noindent \textbf{Keywords}: short-rate model, interest rate, bond pricing, yield curve, spectral methods

\section{Introduction}
Between January of 2021 to December of 2021, the annual rate of inflation in the United States rose from 1.4\% to 7.0\% (source: \cite{USLaborStat}).
Although Treasury Secretary Janet Yellin initially called the rapid rise in inflation ``transitory,'' it had become apparent by early 2022 that high levels of inflation had become entrenched.  In an effort get inflation down to its 2\% target, between February of 2022 and August of 2023, the Federal Reserve raised its effective rate from roughly 0.1\% to just over 5.3\% (source: \cite{StLouisFed}).  Despite the historically fast pace of interest rate hikes, the annual rate of inflation in September of 2023 was still sitting at 3.7\%, leading Federal Reserve Chairman Jerome Powell to declare that interest rates would likely remain ``higher for longer.'' And, indeed, after topping out 5.32\% in July of 2023, the federal overnight rate remained at this value until August of 2024. 
\\[0.5em] 
In this paper, we present a class of one-factor short-rate models that capture this ``higher for longer'' phenomenon.  Specifically, we assume that the short-rate is modeled as a Markov diffusion on a finite interval. The upper end of the interval is a regular (i.e., accessible) boundary according the Feller's boundary classification, and is assumed to have absorbing behavior.  That is, when the short rate reaches the upper end of the interval, it remains there forever.  Although in reality, the federal reserve will likely lower interest rates when inflation returns to the Federal Reserve's 2\% target, the absorbing behavior captures the effect of interest rates remaining high for an extended period of time.  Thus, the class of models in this paper can be seen as valid up to some finite time-horizon, after which the short-rate may be lowered.  The class of models in this paper has the additional benefit of being analytically tractable, as the bond price can be written explicitly.
\\[0.5em]
The rest of this paper proceeds as follows:
in Section \ref{sec:model} we introduce a class of short-rate models described by a one-factor Markov diffusion on a finite interval.  
In this setting, we use risk-neutral pricing to derive an expression for the value of a financial derivative whose payoff depends on the terminal value of the short-rate.
We additionally relate the value of this financial derivative to the solution of a partial differential equation (PDE).
In Section \ref{sec:pde}, we perform a change of variables that enables us to express the solution of the pricing PDE in terms of the solutions of a nonlinear ordinary differential equation (ODE) and generalized eigenvalue problem.
Lastly, in Section \ref{sec:example}, we consider a class of short-rate models for which the ODE and generalized eigenvalue problem described in Section \ref{sec:pde} can be solved explicitly.  We additionally perform a numerical study of bond prices and the resulting yield curves for two specific examples within this class.  The two examples are qualitatively different in that the first example has an exit boundary at the origin and the eigenvalue problem has a discrete set of solutions, while the second example has a natural boundary at the origin and the eigenvalue problems has a continuous set of solutions.

%
%

\section{Short-rate model and Pricing}
\label{sec:model}
To begin, we fix a time horizon $\Tb < \infty$ and consider a continuous-time financial market, defined on a filtered probability space $(\Om,\Fc,\Fb,\Pb)$ with no arbitrage and no transaction costs.  The probability measure $\Pb$ represents the market's chosen pricing measure taking the \textit{money market account} $M = (M_t)_{0 \leq t \leq \Tb}$ as num\'eraire.  The filtration $\Fb = (\Fc_t)_{0 \leq t \leq \Tb}$ represents the history of the market.
\\[0.5em]
We suppose that the money market account $M$ is strictly positive, continuous and non-decreasing.  As such, there exists a non-negative $\Fb$-adapted \textit{short-rate} process $X = (X_t)_{0 \leq t \leq \Tb}$ such that
\begin{align}
\dd M_t
	&=	X_t M_t \, \dd t ,  &
M_0
	&> 0 . \label{eq:dM}
\end{align}
We will focus on the case in which the dynamics of the short-rate $X$ are described by a Stochastic Differential Equation (SDE) of the following form
\begin{align}
\dd X_t
	&= \Ib{X_t \in (0,L)} \Big( \mu(X_t) \dd t + \sig(X_t) \dd W_t \Big) , &
X_0
	&\in [0,L] , \label{eq:short-rate-SDE}
\end{align}
where $W$ is a $(\Pb,\Fb)$-Brownian motion.  We assume that the \textit{drift} $\mu : (0,L) \to \Rb$ and \textit{volatility} $\sig : (0,L) \to \Rb_+$ are such that the origin is either a regular, exit or a natural boundary, and that $L$ is a regular boundary, according to Feller's boundary classification for scalar diffusions \cite{Feller1952THEPD}. 
We further assume that $X$ has a unique strong solution up until the first exit time of $(0,L)$.  Observe that, as soon as $X$ exits $(0,L)$, we have $\dd X_t = 0$ due to the indicator function on the right-hand side of \eqref{eq:short-rate-SDE}.  And thus, the endpoint $\{ 0 \}$ is an absorbing state if it is a regular or exit boundary and endpoint $\{L\}$ is an absorbing state.
\\[0.5em]
Let us denote by $V = (V_t)_{0 \leq t \leq T}$ the value of a financial derivative that has a payoff $g(X_T)$ at time $T \leq \Tb$, where $g: [0,L] \to \Rb$.  Using standard risk-neutral pricing, we have
\begin{align}
\frac{V_t}{M_t}
	&=	\Eb \Big( \frac{V_T}{M_T} \Big| \Fc_t \Big) , &
0
	&\leq t \leq T .
\end{align}
Solving for $V_t$ and using $M_t = M_0 \exp ( \int_0^t X_s \dd s )$ and $V_T = g(X_T)$, we obtain
\begin{align}
V_t
	&=	\Eb \Big( \ee^{- \int_t^T X_s \dd s} g(X_T) \Big| \Fc_t \Big) =: u(t,X_t) , \label{eq:u-def}
\end{align} 
where we have used the Markov property of $X$ to express $V_t$ as a function $u:[0,T] \times [0,L] \to \Rb$ of $(t,X_t)$.  By the Feynman-Kac formula, the function $u$ satisfies the following partial differential equation (PDE) and boundary conditions (BCs)
\begin{align}
( \d_t + \Ac - x ) u
	&=	0 , &
&u(T,x)
	=	g(x) , \label{eq:u-pde} \\
u(t,0)
	&=	g(0) , &
	&\text{if $0$ is regular or exit} \label{eq:u-bc-0}, \\
u(t,L)
	&=	\ee^{- L (T - t) } g(L) , \label{eq:u-bc-L} 
\end{align}
where the operator $\Ac$ is the \textit{infinitesimal generator of $X$} under $\Pb$ and is given explicitly by
\begin{align}
\Ac
	&=	\mu(x) \d_x + \tfrac{1}{2} \sig^2(x) \d_x^2  . \label{eq:A}
\end{align}
Throughout this paper we assume that Cauchy problem \eqref{eq:u-pde}-\eqref{eq:u-bc-L} has a unique classical solution.
\\[0.5em]
Notice that the function $u$ defined in \eqref{eq:u-def} depends on the maturity date $T$ and the payoff function $g$.  If we wish to emphasize the dependence of $u$ on $T$ and $g$, we may sometimes write $u(t,X_t) \equiv u(t,X_t;T,g)$.  For example, denoting by $B^T = (B_t^T)_{0 \leq t \leq T}$ the price of a \textit{zero-coupon bond} that pays one unit of currency at the maturity date $T$ and by $Y^T = (Y_t^T)_{0 \leq t \leq T}$ the associated \textit{yield to maturity}, we have
\begin{align}
B_t^T
	&= u(t, X_t; T, 1) , &
Y_t^T 
	&= \frac{- \log B_t^T}{T-t}  . \label{eq:bond-yield}
\end{align}

%
%

\section{Solving the pricing PDE}
\label{sec:pde}
In this section, we will obtain an explicit expression for the solution $u$ of Cauchy problem \eqref{eq:u-pde}-\eqref{eq:u-bc-L}.  
The expression we obtain will depend on the solution of a first order, nonlinear, ordinary differential equation (ODE) as well as the (possibly improper) eigenfunctions and eigenvalues of a linear second-order differential operator.  We begin with a short lemma.

\begin{lemma}
Let $u$ be the unique classical solution of Cauchy problem \eqref{eq:u-pde}-\eqref{eq:u-bc-L}.  
Suppose $f$ satisfies the following ordinary differential equation
\begin{align}
0
	&=	\tfrac{1}{2} \sig^2(x) (\d_x f)^2 - \Ac f(x) - x ,  \label{eq:f-ode}
\end{align}
where the operator $\Ac$ is given by \eqref{eq:A}.
Define the function $w:[0,T] \times [0,L] \to \Rb$ as follows
\begin{align}
w(t,x) 
	&:=	\ee^{f(x)} \Big( u(t,x) - \ee^{-x(T-t)} g(x)  \Big) . \label{eq:w-def}
\end{align}
Then the function $w$ satisfies the following Cauchy problem
\begin{align}
( \d_t + \Act ) w + q
	&=	0, &
&w(T,x)
	=	0, \label{eq:w-pde} \\
w(t,0)
	&=	0, &
	&\text{if $0$ is regular or exit} , \label{eq:w-bc-0} \\
w(t,L)
	&=	0 , \label{eq:w-bc-L} 
\end{align}
where the operator $\Act$ and the function $q$ are given by
\begin{align}
\Act
	&=	\mut(x) \d_x + \tfrac{1}{2} \sig^2(x) \d_x^2 &
\mut(x)
	&=	\mu(x) - \sig^2(x) f'(x) , &
q(t,x)
	&=	( \d_t + \Act ) \ee^{-x(T-t) + f(x)} g(x) . \label{eq:q}
\end{align}
\end{lemma}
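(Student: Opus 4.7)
\medskip

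\noindent\textbf{Proof plan.}
The plan is to verify the claim by direct substitution, with the ODE \eqref{eq:f-ode} playing the role of a Cole--Hopf--type compatibility condition that cancels the zeroth-order term produced by the exponential prefactor. The key step is to establish an intertwining identity that expresses $(\d_t + \Act)(\ee^{f} \cdot)$ in terms of $(\d_t + \Ac - x)$ acting on the ``inner'' function, independent of $u$.

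\medskip

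\noindent I would first introduce the auxiliary function $v(t,x) := u(t,x) - \ee^{-x(T-t)} g(x)$ so that $w = \ee^{f(x)} v$. The virtue of this decomposition is that, by the prescribed terminal and boundary conditions \eqref{eq:u-pde}--\eqref{eq:u-bc-L} for $u$, one immediately reads off $v(T,x)=0$, $v(t,L)=0$, and (when $0$ is regular or exit) $v(t,0)=0$, which transfer directly to the homogeneous conditions \eqref{eq:w-pde}--\eqref{eq:w-bc-L} for $w$. So the terminal/boundary parts of the conclusion are essentially free, and only the PDE needs real work.

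\medskip

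\noindent The main computational step is to prove the operator identity
\begin{align}
(\d_t + \Act)\bigl( \ee^{f(x)} h(t,x) \bigr)
    \;=\; \ee^{f(x)} \,(\d_t + \Ac - x)\, h(t,x)
\end{align}
for any sufficiently smooth $h$. I would expand the left-hand side by the product rule, using $\d_x(\ee^f h) = \ee^f(\d_x h + f' h)$ and $\d_x^2(\ee^f h) = \ee^f\bigl(\d_x^2 h + 2 f' \d_x h + ((f')^2 + f'') h\bigr)$, substitute the definition $\mut = \mu - \sig^2 f'$ from \eqref{eq:q}, and collect terms. The first-order derivative coefficient telescopes back to $\mu$, reproducing $\Ac h$ on the right, while the coefficient multiplying $h$ becomes
\begin{align}
\mut f' + \tfrac{1}{2}\sig^2 (f')^2 + \tfrac{1}{2}\sig^2 f''
    \;=\; \Ac f - \tfrac{1}{2}\sig^2 (f')^2,
\end{align}
which equals $-x$ precisely by the ODE \eqref{eq:f-ode}. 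This is the one place where \eqref{eq:f-ode} is used, and it is exactly what makes the transformation work.

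\medskip

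\noindent With the intertwining identity in hand, applying it first to $h = v$ and invoking $(\d_t + \Ac - x) u = 0$ gives
\begin{align}
(\d_t + \Act) w
    \;=\; \ee^{f}(\d_t + \Ac - x) v
    \;=\; -\,\ee^{f}(\d_t + \Ac - x)\bigl[\ee^{-x(T-t)} g(x)\bigr],
\end{align}
while applying it a second time to $h(t,x) = \ee^{-x(T-t)} g(x)$ yields
\begin{align}
q(t,x)
    \;=\; (\d_t + \Act)\bigl[\ee^{-x(T-t)+f(x)} g(x)\bigr]
    \;=\; \ee^{f}(\d_t + \Ac - x)\bigl[\ee^{-x(T-t)} g(x)\bigr].
\end{align}
Adding the two lines gives $(\d_t + \Act) w + q = 0$, which is \eqref{eq:w-pde}. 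The homogeneous terminal and boundary conditions \eqref{eq:w-pde}--\eqref{eq:w-bc-L} then follow from the observations about $v$ made above. The main (and really only) obstacle is the bookkeeping in the product-rule expansion leading to the intertwining identity; once that algebra is done, the rest is immediate.
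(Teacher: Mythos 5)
Your proposal is correct and follows essentially the same route as the paper: both verify the homogeneous terminal/boundary conditions directly from \eqref{eq:u-pde}--\eqref{eq:u-bc-L}, and both reduce the PDE to the conjugation identity $\ee^{f}(\d_t+\Ac-x)\ee^{-f}=\d_t+\Act$, with the ODE \eqref{eq:f-ode} cancelling the zeroth-order term. The only difference is presentational -- you spell out the product-rule algebra and apply the intertwining identity separately to $v$ and to $\ee^{-x(T-t)}g$, whereas the paper substitutes $u=\ee^{-f}\bigl(w+\ee^{-x(T-t)+f}g\bigr)$ into the $u$-equation and states the conjugation in one line.
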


\begin{proof}
First, we have from \eqref{eq:u-pde}, \eqref{eq:u-bc-0} \eqref{eq:u-bc-L} and \eqref{eq:w-def} that
\begin{align}
w(T,x) 
	&:=	\ee^{f(x)} \Big( u(T,x) - \ee^{-x(T-T)} g(x) \Big)
	=		\ee^{f(x)} \Big( g(x) - \ g(x) \Big) 
	=		0 , \\
w(t,0) 
	&:=	\ee^{f(0)} \Big( u(t,0) - \ee^{-0(T-t)} g(0) \Big)
	=		\ee^{f(0)} \Big( g(0) - g(0)\Big)
	=		0 , \\
w(t,L) 
	&:=	\ee^{f(L)} \Big( u(t,L) - \ee^{-L(T-t)} g(L) \Big)
	=		\ee^{f(L)} \Big( \ee^{- L (T - t)} g(L) - \ee^{-L(T-t)} g(L)\Big)
	=		0 ,
\end{align}
which establishes the BCs in \eqref{eq:w-pde}, \eqref{eq:w-bc-0} and \eqref{eq:w-bc-L}.  Next, we have from \eqref{eq:w-def} that
\begin{align}
u(t,x)
	&=	\ee^{-f(x)} \Big( w(t,x) + \ee^{-x(T-t) + f(x)} g(x) \Big) . \label{eq:u-w}
\end{align}
Inserting \eqref{eq:u-w} into \eqref{eq:u-pde}, multiplying both sides by $\ee^{f(x)}$ and using the fact that $f$ satisfies \eqref{eq:f-ode}, we obtain
\begin{align}
0
	&=	\ee^{f(x)} ( \d_t + \Ac - x ) \ee^{-f(x)} \Big( w(t,x) + \ee^{-x(T-t) + f(x)} g(x) \Big) \\
	&=	\ee^{f(x)} \ee^{-f(x)}  ( \d_t + \Act ) \Big( w(t,x) + \ee^{-x(T-t) + f(x)} g(x) \Big) \\
	&=	( \d_t + \Act ) w(t,x) + q(t,x) .
\end{align}
which establishes the PDE in \eqref{eq:w-pde}.
\end{proof}

\noindent At this point, it will be convenient to introduce the \textit{fundamental solution} $\Gamt$ associated with the operator $(\d_t + \Act)$.  Specifically, for any $(T,y) \in [0,\Tb] \times [0,L]$ we define $\Gamt(\, \cdot \, , \, \cdot \, ; T, y) : [0,T] \times [0,L] \to \Rb_+ $ as the unique classical solution of
\begin{align}
(\d_t + \Act) \Gamt(\, \cdot \, , \, \cdot \, ;T,y)
	&=	0 , &
&\Gamt(T,x;T,y)
	=	\del_y(x) , \label{eq:Gamma-pde} \\
\Gamt(t,0;T,y)
	&=	0 , &
	&\text{if $0$ is regular or exit} , \label{eq:Gamma-bc-0} \\
\Gamt(t,L;T,y)
	&=	0 . \label{eq:Gamma-bc-L}
\end{align}

\begin{remark}
\label{rem:equivprob}
Consider the following change of measure
\begin{align}
\frac{\dd \Pbt}{\dd \Pb}
	&:= \exp \Big( \frac{1}{2} \int_{0}^{\Tb}{\sig^2(X_t) (f'(X_t))^2 \dd t} - \int_{0}^{\Tb} \sig(X_t) f'(X_t) \dd W_t \Big) . 
			\label{eq:radon-nikodym}
\end{align}
We have from Girsanov's Theorem that the process $\Wt = (\Wt_t)_{0 \leq t \leq \Tb}$, defined by 
\begin{align}
\dd \Wt_t 
	&:= \sig(X_t) f'(X_t) \dd t + \dd W_t , &
\Wt_0
	&=	0 , \label{eq:W-tilde}
\end{align}
is a $(\Pbt,\Fc)$-Brownian motion.  From \eqref{eq:short-rate-SDE} and \eqref{eq:W-tilde}, we see that the dynamics of $X$ under $\Pbt$ are 
\begin{align}
\dd X_t
	&=	\Ib{X_t \in (0,L)} \Big( \mut(X_t) \dd t + \sig(X_t) \dd W_t \Big) ,
\end{align}
and the generator of $X$ under $\Pbt$ is the operator $\Act$ given in \eqref{eq:q}.  If follows that the solution $\Gamt$ of \eqref{eq:Gamma-pde}-\eqref{eq:Gamma-bc-L} is the \textit{transition density} of $X$ under $\Pbt$.
\end{remark}	

\noindent 
We can now state the main result of this section.
\begin{theorem}
The solution $u$ of Cauchy problem \eqref{eq:u-pde}-\eqref{eq:u-bc-L} is given by
\begin{align}
u(t,x)
	&=	\ee^{-x(T-t)} g(x) + \ee^{-f(x)} \int_t^T \dd s \int_0^L \dd \xi \, \Gamt(t,x;s,\xi) ( \d_s + \Act ) \ee^{-\xi(T-s) + f(\xi)} g(\xi) , \label{eq:u-explicit}
\end{align}
where the operator $\Act$ is given by \eqref{eq:q} and the functions $f$ and $\Gamt$ satisfy \eqref{eq:f-ode} and \eqref{eq:Gamma-pde}-\eqref{eq:Gamma-bc-L}, respectively.
\end{theorem}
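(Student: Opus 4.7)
The plan is to apply the preceding lemma to reduce the Cauchy problem for $u$ to the Cauchy problem \eqref{eq:w-pde}--\eqref{eq:w-bc-L} for $w$, then solve the latter via Duhamel's principle using the fundamental solution $\Gamt$, and finally invert the transformation \eqref{eq:w-def} to recover $u$.

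First, I would observe that the $w$-problem is an inhomogeneous backward Cauchy problem with zero terminal data, zero boundary data (at $L$, and at $0$ when $0$ is regular or exit), and forcing term $-q$. The key step is to express $w$ in terms of $\Gamt$ and $q$. By the defining properties \eqref{eq:Gamma-pde}--\eqref{eq:Gamma-bc-L} of the fundamental solution, Duhamel's principle gives
\begin{align}
w(t,x) = \int_t^T \dd s \int_0^L \dd \xi \, \Gamt(t,x;s,\xi) \, q(s,\xi) .
\end{align}
This can be verified directly: differentiating in $t$ produces a boundary term at $s=t$ (which equals $-q(t,x)$ thanks to the terminal condition $\Gamt(t,x;t,\xi) = \del_x(\xi)$, read in the forward-in-$s$ direction) together with the action of $\Act$, and the integrand itself is annihilated by $(\d_t + \Act)$ in the variables $(t,x)$. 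Alternatively, one may invoke the probabilistic interpretation in Remark \ref{rem:equivprob}: under $\Pbt$, $\Gamt$ is the transition density of $X$ killed at the accessible boundaries, so the Feynman--Kac / Duhamel formula for an inhomogeneous terminal-value problem with vanishing data yields exactly the displayed integral.

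Next, plugging the expression for $q$ from \eqref{eq:q} into the Duhamel representation gives
\begin{align}
w(t,x) = \int_t^T \dd s \int_0^L \dd \xi \, \Gamt(t,x;s,\xi) \, (\d_s + \Act) \ee^{-\xi(T-s) + f(\xi)} g(\xi) ,
\end{align}
and solving \eqref{eq:w-def} for $u$ yields
\begin{align}
u(t,x) = \ee^{-x(T-t)} g(x) + \ee^{-f(x)} w(t,x) ,
\end{align}
which is exactly \eqref{eq:u-explicit}.

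The main obstacle is the rigorous justification of the Duhamel representation in the presence of the boundary behavior: one must make sure that $\Gamt$ honors the correct boundary conditions (absorbing at $L$, and at $0$ in the regular/exit case; no condition needed at $0$ in the natural case since it is inaccessible under $\Pbt$ as well, provided the Girsanov change of measure in \eqref{eq:radon-nikodym} preserves Feller's classification of $0$). Under our standing assumption that the Cauchy problem \eqref{eq:u-pde}--\eqref{eq:u-bc-L} has a unique classical solution and that $\Gamt$ is a genuine fundamental solution as declared in \eqref{eq:Gamma-pde}--\eqref{eq:Gamma-bc-L}, uniqueness for the $w$-problem forces the solution to coincide with the Duhamel integral above, completing the proof.
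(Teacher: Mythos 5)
Your proposal is correct and follows essentially the same route as the paper: apply the lemma to pass to the $w$-problem, represent $w$ via Duhamel's principle with the fundamental solution $\Gamt$, substitute the expression for $q$, and invert \eqref{eq:w-def}. The extra verification of the Duhamel formula and the remarks on boundary behavior are sound elaborations of what the paper leaves implicit.
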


\begin{proof}
By Duhamel's principle, we have from \eqref{eq:w-pde} that
\begin{align}
w(t,x)
	&=	\int_t^T \dd s \int_0^L \dd \xi \, \Gamt(t,x;s,\xi) q(s,\xi)  , 
\end{align}
where $\Gamt$ satisfies \eqref{eq:Gamma-pde}-\eqref{eq:Gamma-bc-L}. Thus, we have from \eqref{eq:u-w} that
\begin{align}
u(t,x)
	&=	\ee^{-x(T-t)} g(x) + \ee^{-f(x)} \int_t^T \dd s \int_0^L \dd \xi \, \Gamt(t,x;s,\xi) q(s,\xi) . \label{eq:u-2}
\end{align}
Equation \eqref{eq:u-explicit} follows by inserting expression \eqref{eq:q} for $q$ into \eqref{eq:u-2}.
\end{proof}

\noindent
Observe from \eqref{eq:u-2} that, to write the solution $u$ of \eqref{eq:u-pde}-\eqref{eq:u-bc-L}, we need an expression for a solution $f$ of \eqref{eq:f-ode}.  The following Proposition gives sufficient conditions on the coefficients $\mu$ and $\sig$ under which \eqref{eq:f-ode} has an explicit solution.

\begin{proposition}
Suppose $\mu$ and $\sig$ are given by
\begin{align}
\mu(x)
	&=	\frac{ - \sig^2(x) \phi'(x) }{4 \phi(x)} , &
\sig(x)
	&=	\sqrt{ \frac{2x}{\phi(x)} } , \label{eq:mu-sig-conditions}
\end{align}
for some non-negative function $\phi$.  Then
\begin{align}
f(x)
	&:=	- \int \sqrt{\phi(x)} \dd x + c , \label{eq:f-explicit}
\end{align}
is a solution of \eqref{eq:f-ode}, where $c$ is an arbitrary constant.  
\end{proposition}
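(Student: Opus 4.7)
The plan is to verify the claim by direct substitution. The structure of equation \eqref{eq:f-ode} is
\begin{align}
\tfrac{1}{2}\sig^2(x)(f'(x))^2 - \mu(x)f'(x) - \tfrac{1}{2}\sig^2(x)f''(x) - x = 0,
\end{align}
so I will compute each of the three $f$-dependent terms separately, using the proposed formulas for $\mu$, $\sig$, and $f$, and then check that the total is $x$.

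First I would differentiate \eqref{eq:f-explicit} to obtain $f'(x) = -\sqrt{\phi(x)}$ and $f''(x) = -\phi'(x)/(2\sqrt{\phi(x)})$, which are well-defined as long as $\phi$ is strictly positive (a standing assumption coming from the fact that $\sig$ must be real and finite on $(0,L)$). Substituting the formula $\sig^2(x) = 2x/\phi(x)$ from \eqref{eq:mu-sig-conditions} immediately gives
\begin{align}
\tfrac{1}{2}\sig^2(x)(f'(x))^2 = \tfrac{1}{2} \cdot \tfrac{2x}{\phi(x)} \cdot \phi(x) = x,
\end{align}
which already accounts for the $-x$ term in \eqref{eq:f-ode}. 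It therefore only remains to show that $\Ac f(x) = 0$.

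For this I would plug in the formulas for $\mu$, $\sig^2$, $f'$, $f''$ and simplify:
\begin{align}
\Ac f(x)
= \mu(x) f'(x) + \tfrac{1}{2}\sig^2(x) f''(x)
= \(-\tfrac{\sig^2(x)\phi'(x)}{4\phi(x)}\)\(-\sqrt{\phi(x)}\) + \tfrac{1}{2}\sig^2(x)\(-\tfrac{\phi'(x)}{2\sqrt{\phi(x)}}\),
\end{align}
and the two terms are exact opposites, both equal to $\pm \sig^2(x)\phi'(x)/(4\sqrt{\phi(x)})$, so $\Ac f(x) = 0$. Combining with the previous paragraph, equation \eqref{eq:f-ode} is satisfied identically.

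The argument is a pure verification, so there is no real obstacle; the only thing to be careful about is the sign bookkeeping in the two halves of $\Ac f$ and the implicit assumption $\phi>0$ which ensures $f'$ is real. I would close by noting that the integration constant $c$ is irrelevant because only $f'$ and $f''$ enter \eqref{eq:f-ode}, which accounts for the arbitrariness of $c$ in the statement.
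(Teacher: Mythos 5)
Your verification is correct and is in substance the same argument as the paper's: the paper merely repackages the direct substitution by setting $F:=-f'$, dividing \eqref{eq:f-ode} by $\tfrac12\sig^2(x)$ to obtain the Riccati equation $F'+F^2-\tfrac{\phi'}{2\phi}F-\phi=0$, and checking that $F=\sqrt{\phi}$ satisfies it. Your decomposition into the quadratic term (which yields exactly $x$) and the cancellation $\Ac f=0$ is an equivalent, if anything slightly more transparent, bookkeeping of the same computation, and your closing remarks about $\phi>0$ and the irrelevance of the constant $c$ are sound.
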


\begin{proof}
Define $F := - f'$.  Then we have from \eqref{eq:f-ode} that
\begin{align}
0
	&=	F' + F^2 + \frac{2\mu}{\sig^2} F - \frac{2x}{\sig^2} 
	=	F' + F^2 - \frac{\phi'}{2\phi} F - \phi , \label{eq:F-special}
\end{align}
where the second equality follows from \eqref{eq:mu-sig-conditions}.  It is easy to see by direction substitution that $F = \sqrt{\phi}$ satisfies \eqref{eq:F-special}.  Thus, we have from $f = - \int F \dd x$ that $f$ is given by \eqref{eq:f-explicit}.
\end{proof}

\noindent
We note from \eqref{eq:u-2} that, to write the solution $u$ of \eqref{eq:u-pde}-\eqref{eq:u-bc-L}, we need an expression for the solution $\Gamt$ of \eqref{eq:Gamma-pde}-\eqref{eq:Gamma-bc-L}.  To this end, it will be helpful to write the operator $\Act$ in \textit{self-adjoint form}.  We have
\begin{align}
\Act
	&=		\frac{1}{\mt(x)} \d_x \frac{1}{\st(x)} \d_x , &  
\st(x)
	&:=	\exp \Big( - \int \dd x \, \frac{2 \mut(x)}{\sig^2(x)} \Big) , &
\mt(x)
	&:=	\frac{2}{\sig^2(x)} \exp \Big( \int \dd x \, \frac{2 \, \mut(x)}{\sig^2(x)} \Big) , \label{eq:scale-speed}
\end{align}
where we have introduced the \textit{scale density} $\st$ and the \textit{speed density} $\mt$ of the operator $\Act$.
The scale and speed densities will be needed later in this paper to determine the behavior of the short-rate $X$ at the origin.

\begin{assumption}
\label{ass:simple}
Henceforth, we assume the spectrum of $\Act$, denoted $\sig(\Act)$, is simple.
\end{assumption}

\noindent
Now, consider the following \textit{eigenvalue problem}
\begin{align}
\Act \psi_\lam 
	&= \lam \psi_\lam , \label{eq:psi-ode}  \\
\psi_\lam(0) 
	&= 0, &
	&\text{if $0$ is regular or exit} \label{eq:psi-bc-0} , \\
\psi_\lam (L)
	&= 0. \label{eq:psi-bc-L} 
\end{align} 
In general, the spectrum of self-adjoint operator $\Act$ has the decomposition $\sig(\Act) = \sig_d(\Act) \cup \sig_c(\Act)$, where $\sig_d(\Act)$ and $\sig_c(\Act)$ denote the \textit{discrete} and \textit{continuous} portions of $\sig(\Act)$, respectively. We will denote by $(\lam_n)_{n \in \Nb} = \sig_d(\Act)$ the set of \textit{proper eigenvalues of $\Act$} and by $(\lam(\rho))_{\rho \in \Rb_+} = \sig_c(\Act)$ the set of \textit{improper eigenvalues of $\Act$}.
Similarly, we will denote by $( \psi_n)_{n \in \Nb}$ the \textit{proper eigenfunctions of $\Act$}, normalized as follows
\begin{align}
\psi_n 
	&:= \psi_{\lam_n} , &
\lam_n
	&\in \sig_d(\Ac) , &
\< \psi_n , \psi_k \>_{\mt}
	&:=	\int_0^L \psi_n(x) \psi_k(x) \mt(x) \dd x = \delta_{n, k} , \label{eq:normalization}
\end{align}
and by $( \psi(\rho, \, \cdot \, ))_{\rho \in \Rb_+}$ the \textit{improper eigenfunctions of $\Act$}, normalized according to
\begin{align}
\psi(\rho, \, \cdot \, )
	&:= \psi_{\lam(\rho)} , &
\lam(\rho)
	&\in \sig_c(\Ac) , &
\int_0^L \psi(\rho, x) \psi(\rho',x) \mt(x) \dd x 
	&= 	\delta(\rho-\rho') , \label{eq:normalization-continuous}
\end{align}
where $\lam(\rho) = -c \rho^2$ for some $c > 0$. 
Then, under Assumption \ref{ass:simple}, we have from \cite[Chapter 5]{hansonyakovlev} that the solution $\Gamt$ of \eqref{eq:Gamma-pde}-\eqref{eq:Gamma-bc-L} has the following (generalized) eigenfunction representation
\begin{align}
\Gamt(t,x;T,y)
	&=	 \mt(y) \Big( \sum_{n \in \Nb} \ee^{\lam_n(T-t)} \psi_{n}(x) \psi_{n}(y) 
			+  \int_{0}^{\infty}{\ee^{\lam(\rho) (T-t)} \psi(\rho,x)\psi(\rho, y) \dd \rho} \Big) .
			\label{eq:Gamma}
\end{align}
Thus, we need only to solve eigenvalue problem \eqref{eq:psi-ode}-\eqref{eq:psi-bc-L} to construct the transition density $\Gamt$.
\noindent
The following proposition, which is based on the transformation to \textit{Liouville normal form} \cite[Chapter 1]{NIST:DLMF}, can be helpful to find solutions of the eigenvalue equation \eqref{eq:psi-ode} in the case where $\mu$ and $\sig$ are given by \eqref{eq:mu-sig-conditions} for some non-negative function $\phi$.

\begin{proposition}
Let $\phi$ be a non-negative function.  Suppose $\mu$ and $\sig$ are given by \eqref{eq:mu-sig-conditions}.  Then the scale density $\st(x)$, the speed density $\mt(x)$, defined in \eqref{eq:scale-speed}, are  given by 
\begin{align}
\st(x)
	&=	C \sqrt{ \phi(x) } \exp \Big( -2 \int \dd x \sqrt{\phi(x)} \Big) , &
\mt(x)
	&=	\frac{\sqrt{\phi(x)}}{ C x} \exp \Big( 2 \int \dd x \sqrt{\phi(x)} \Big) , \label{eq:speed-density} 
\end{align}
where $C$ is an arbitrary constant, and the operator $\Act$ is given by
\begin{align}
\Act
	&=	\Big( \frac{2 x }{\sqrt{\phi (x)}} - \frac{x \phi'(x)}{2 \phi^2 (x)} \Big) \d_x + \frac{x}{\phi (x)} \d_x^2 , \label{eq:Act-again}
\end{align}
Suppose $\psi_{\lam}$ is a (possibly improper) eigenfunction of $\Act$, which satisfies \eqref{eq:psi-ode}-\eqref{eq:psi-bc-L}.  Define functions $p$ and $\eta_{\lam}$ through the following equations
\begin{align}
p(x)
	&:=	\int \dd x \frac{1}{\sig(x)} , &
\psi_{\lam}(x)
	&=	\eta_{\lam}(p(x)) \sqrt{\sig(x) \st(x)} , \label{eq:z-def}
\end{align}
Then $\eta_{\lam}$ satisfies
\begin{align}
\tfrac{1}{2} \eta_{\lam}'' - V \eta_{\lam}
	&=	\lam \eta_{\lam} , &
V(z)
	&:=	U(p^{-1}(z)) , &
U(x)
	&:=	\frac{\phi'(x)}{8 \phi^2(x)}+ \frac{3 }{16 x \phi(x)} + x , \label{eq:eta-ode} \\
\eta_{\lam}(p(0))
	&=	0 , &
	&\text{if $0$ is regular or exit} , \label{eq:eta-bc-0} \\
\eta_{\lam}(p(L))
	&=	0 , \label{eq:eta-bc-L}
\end{align}
where $p^{-1}$ is the inverse of $p$. 
\end{proposition}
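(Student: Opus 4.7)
The plan is direct substitution followed by the standard Liouville transformation of a Sturm--Liouville operator to Schr\"odinger form.

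First, the preceding proposition gives $f'(x) = -\sqrt{\phi(x)}$, so the $\Pbt$-drift is
\[
\mut(x) = \mu(x) - \sig^2(x) f'(x) = -\frac{x\phi'(x)}{2\phi^2(x)} + \frac{2x}{\sqrt{\phi(x)}}
\]
upon invoking \eqref{eq:mu-sig-conditions}. Combining this with $\tfrac{1}{2}\sig^2(x) = x/\phi(x)$ yields \eqref{eq:Act-again} immediately. The scale and speed densities \eqref{eq:speed-density} then follow by computing $2\mut/\sig^2 = -\phi'/(2\phi) + 2\sqrt{\phi}$, integrating, and substituting into \eqref{eq:scale-speed}; the additive constant of integration absorbs into $C$.

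For the Liouville step I set $z = p(x)$, so that $\d_x = \sig(x)^{-1}\d_z$, and posit $\psi_\lam(x) = R(x)\eta_\lam(z)$ with $R$ to be chosen. Expanding $\Act[R\eta_\lam] = \lam R \eta_\lam$ produces an ODE for $\eta_\lam$ whose $\d_z \eta_\lam$ coefficient vanishes precisely when
\[
\frac{R'(x)}{R(x)} = -\frac{\mut(x)}{\sig^2(x)} + \frac{\sig'(x)}{2\sig(x)} .
\]
Integrating and exponentiating (using $\st = \exp(-\int 2\mut/\sig^2\,dx)$ up to a multiplicative constant) gives $R(x) = \sqrt{\sig(x)\st(x)}$, matching \eqref{eq:z-def}. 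The residual equation then takes the Schr\"odinger form $\tfrac{1}{2}\eta_\lam'' - V\eta_\lam = \lam \eta_\lam$ with $V = -(\mut\, R'/R + \tfrac{1}{2}\sig^2\, R''/R)$.

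The only nontrivial step is simplifying $V$ to recover $U$. Using the explicit forms above, $R(x) = \sqrt{C}\,(2x)^{1/4}\exp(-\int\sqrt{\phi}\,dx)$, whence $R'/R = 1/(4x) - \sqrt{\phi}$ and a short calculation yields $R''/R = -3/(16x^2) - \phi'/(2\sqrt{\phi}) - \sqrt{\phi}/(2x) + \phi$. Summing $\mut\,R'/R$ with $\tfrac{1}{2}\sig^2\,R''/R$, the $x\phi'/(2\phi^{3/2})$ and $\pm 1/(2\sqrt{\phi})$ contributions cancel pairwise, collapsing the expression to $-\phi'/(8\phi^2) - 3/(16x\phi) - x$; negating recovers $U(x)$ exactly. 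This algebraic bookkeeping is the main obstacle, though it requires no identities beyond polynomial manipulation.

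Finally, the boundary conditions \eqref{eq:eta-bc-0}-\eqref{eq:eta-bc-L} transfer directly from \eqref{eq:psi-bc-0}-\eqref{eq:psi-bc-L}, since $R = \sqrt{\sig\,\st}$ is strictly positive on $(0,L)$, modulo the usual limiting interpretation at any endpoint where $R$ may degenerate.
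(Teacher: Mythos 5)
Your proposal is correct and follows essentially the same route as the paper's own proof: direct substitution of $f'=-\sqrt{\phi}$ into \eqref{eq:q} and \eqref{eq:scale-speed} for the first claims, followed by the standard Liouville transformation to Schr\"odinger form, with the boundary conditions transferred via the positivity of $\sqrt{\sig\,\st}$. Your version simply carries out explicitly the algebra that the paper compresses into the single assertion that $(\Act-\lam)\psi_\lam/\sqrt{\sig\st}$ reduces to $\tfrac12\eta_\lam''-U\eta_\lam-\lam\eta_\lam$; the computation of $R'/R$, $R''/R$ and the cancellation yielding $U$ all check out.
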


\begin{proof}
The expressions for $\mt$, $\st$ and $\Act$ can be computed directly from \eqref{eq:q}, \eqref{eq:mu-sig-conditions}, \eqref{eq:f-explicit} and \eqref{eq:scale-speed}.  Next, using \eqref{eq:psi-ode} and \eqref{eq:z-def}, we have 
\begin{align}
0
	&=	\frac{1}{\sqrt{\sig(x) \st(x)}} ( \Act - \lam) \psi_\lam(x) 
	 =	\tfrac{1}{2} \eta_\lam''(p(x)) - U(x) \eta_\lam(p(x)) - \eta_\lam(p(x)) .
\end{align}
Setting $x = p^{-1}(z)$, we obtain \eqref{eq:eta-ode}.  The BCs \eqref{eq:eta-bc-0}-\eqref{eq:eta-bc-L} for $\eta_\lam$ follow from the BCs \eqref{eq:psi-bc-0}-\eqref{eq:psi-bc-L} for $\psi_\lam$ and \eqref{eq:z-def}.
\end{proof}

%
%

\section{A class of analytically tractable models}
\label{sec:example}
In this section, we analyze a class of models for the short rate $X$ in which the drift $\mu$ and volatility $\sig$ are given by \eqref{eq:mu-sig-conditions} with the function $\phi$ being given by
\begin{align}
\phi(x) 
	&= \frac{2}{a^2} x^{2k-1}, & 
a 
	&> 0 . \label{eq:model-phi}
\end{align}
Inserting \eqref{eq:model-phi} into \eqref{eq:mu-sig-conditions} we obtain
\begin{align}
\sig(x)
	&=	a x^{1-k} , &
\mu(x)
	&=	a^2 \Big( \tfrac{1}{4} - \tfrac{k}{2} \Big) x^{1 - 2k}. \label{eq:mu-sig-example}
\end{align}
Thus, we have from \eqref{eq:short-rate-SDE} and \eqref{eq:mu-sig-example} that the dynamics of the short-rate $X=(X_t)_{0 \leq t \leq \Tb}$ are given by
\begin{align}
\dd X_t
		&= \Ib{X_t \in (0,L)} \Big( a^2 (\tfrac14 - \tfrac{k}{2})X_t^{1-2k} \dd t + a X_t^{1-k}\dd W_t \Big) , &
X_0
		&\in [0,L] , \label{eq:our-short-rate}
\end{align}
\begin{remark}
$k = 0$ corresponds to a geometric Brownian motion with $\mu = \frac{a^2}{4}$ and $\sig = a$. 
\end{remark}
\noindent It then follows from \eqref{eq:Act-again} that the operator $\Act$ is given by
\begin{align}
\Act
	&=	\Big( a^2 \big( \tfrac{1}{4} - \tfrac{k}{2} \big) x^{1-2 k} + a \sqrt{2 x^{3-2k}} \Big) \d_x + \frac{1}{2}a^2 x^{2-2k} \d_x^2 , \label{eq:Act-example}
\end{align}
and from \eqref{eq:speed-density} the scale density $\st$ and the speed density $\mt$ are given by
\begin{align}
\st(x) 
	&= \begin{cases}
			Cx^{k-\frac12} \exp\left(-\frac{2\sqrt2}{a(k+\frac12)}x^{k+\frac12}\right), &k \neq -\frac12 \\
			Cx^{-1-\frac{2\sqrt2}{a}}, &k = -\frac12 
		\end{cases} , \label{eq:st} \\
\mt(x) 
	&= \begin{cases}
			\frac{2}{Ca^2} x^{k-\frac32} \exp\left(\frac{2\sqrt{2}}{a(k+\frac12)}x^{k+\frac12}\right), &k \neq -\frac12 \\
			\frac{2}{Ca^2}x^{-2+\frac{2\sqrt2}{a}}, &k =-\frac12
		\end{cases} , \label{eq:mt}
\end{align}
In order to write spectral representation of the transition density \eqref{eq:Gamma} for the short-rate \eqref{eq:our-short-rate}, we must determine the structure of the spectrum $\sig(\Act)$ of the diffusion generator $\Act$ \eqref{eq:Act-example}, which depends on whether the origin is regular, exit or natural. This will also determine if a BC \eqref{eq:Gamma-bc-0} is needed at the origin. 
\begin{proposition}
\label{thm:origin}
Fix an interval $(0,L)$ and consider the operator $\Act$ given by \eqref{eq:Act-example}.  
For $k \in (-\infty,0]$ the origin is natural, for $k \in (0,1/2]$ the origin is exit, and for $k \in (1/2,\infty)$ the origin is regular.
\end{proposition}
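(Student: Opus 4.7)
The plan is to apply Feller's test for boundary classification to the diffusion with generator $\Act$ and scale/speed densities $\st, \mt$ given in \eqref{eq:st}--\eqref{eq:mt}. Fixing an interior reference point $c \in (0, L)$, I need to examine the finiteness of
\begin{align}
N(0)
	&= \int_0^c \mt(y) \( S(y) - S(0^+) \) \dd y , &
\Sigma(0)
	&= \int_0^c \st(y) \( M(y) - M(0^+) \) \dd y ,
\end{align}
where $S' = \st$ and $M' = \mt$; the origin is \emph{regular} when both are finite, \emph{exit} when $N(0) < \infty$ and $\Sigma(0) = \infty$, and \emph{natural} when both diverge. The goal is to show that $N(0) < \infty$ iff $k > 0$ and $\Sigma(0) < \infty$ iff $k > 1/2$, which gives exactly the stated three regimes.

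The core technical observation I would exploit is that, for $k \ne -1/2$, the scale density has an explicit antiderivative:
\begin{align}
\st(x) = -\frac{Ca}{2\sqrt{2}} \frac{\dd}{\dd x} \exp\Big( -\alpha x^{k + 1/2} \Big) ,
	\qquad \alpha := \frac{2\sqrt{2}}{a(k+1/2)} ,
\end{align}
as may be checked by direct differentiation. This yields a closed form for $S(y) - S(0^+)$ and for $\int_z^c \st(y) \dd y$ in terms of $\ee^{-\alpha y^{k+1/2}}$. The payoff is that, when one of these is paired with $\mt(y) \propto y^{k-3/2} \ee^{\alpha y^{k+1/2}}$ (which carries the opposite exponential) in the integrand of $N(0)$ or $\Sigma(0)$, the exponential factors cancel exactly, reducing each question to elementary power integrability. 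Specifically, $\Sigma(0)$ is (up to constants) dominated near $0$ by $\int_0^c z^{k-3/2} \dd z$, finite iff $k > 1/2$; and for $k > -1/2$ one finds $S(y) - S(0^+) \sim \text{const} \cdot y^{k+1/2}$, so the integrand of $N(0)$ behaves like $y^{2k-1}$, integrable iff $k > 0$. For $k \le -1/2$, $\int_0^c \st = \infty$ (either by exponential blow-up when $k < -1/2$, or by the power law $\st \sim x^{-1-2\sqrt{2}/a}$ at $k = -1/2$), so $S(0^+) = -\infty$ and $N(0) = \infty$; an analogous direct computation shows $\Sigma(0) = \infty$ as well.

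Collecting the cases yields the trichotomy: $k > 1/2$ gives both integrals finite, hence regular; $k \in (0, 1/2]$ gives $N(0) < \infty$ but $\Sigma(0) = \infty$, hence exit; and $k \le 0$ gives both divergent, hence natural. The main obstacle is the regime $k < -1/2$, in which $\st$ and $\mt$ are exponentially large and small at $0$, respectively, so naive power-law bounds give the indeterminate form $\infty \cdot 0$ inside $\Sigma(0)$. The antiderivative identity above is what resolves this cleanly by inducing exact exponential cancellation in the relevant product, thereby avoiding any asymptotic analysis of incomplete Gamma functions.
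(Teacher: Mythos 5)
Your proof is correct and takes essentially the same route as the paper's: both apply Feller's integral test, exploit the explicit antiderivative of the scale density $\st$ so that the exponential factors cancel against the speed density $\mt$, and reduce the two boundary integrals to power-law integrability near the origin (finite iff $k>0$ and iff $k>1/2$, respectively). The only cosmetic difference is that your labels $N$ and $\Sigma$ are swapped relative to the usual Karlin--Taylor convention, but since your classification criteria are stated consistently with your own definitions, the argument is sound.
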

\begin{proof}
See Appendix \ref{sec:origin}.
\end{proof}
\noindent Using \eqref{eq:z-def} and \eqref{eq:eta-ode} we have for $k \neq 0$ that
\begin{align}
z = p(x)
	&=	x^k/ak , &
U(x)
	&=	\frac{1}{32} x^{-2 k} \(32 x^{2 k+1}+a^2 (4 k+1)\) , &
V(z) 
	&=	(akz)^{1/k} + \frac{1+4k}{32 k^2 z^2} ,\label{eq:Liouville-transformation}
\end{align}
and for $k = 0$ we have
\begin{align}
z = p(x) 
	&= \frac{1}{a}\log x , & 
U(x) 
	&= x + \frac{a^2}{32} , &
V(z) 
	&= \ee^{z} + \frac{a^2}{32}. \label{eq:Liouville-transformation-0}
\end{align}
Thus, from \eqref{eq:eta-ode}, \eqref{eq:eta-bc-0} and \eqref{eq:eta-bc-L}, we have for $k \neq 0$ that $(\eta_\lambda,\lambda)$ satisfy 
\begin{align}
\tfrac{1}{2} \eta_{\lambda}''(z) - \Big( (akz)^{1/k} + \frac{1+4k}{32 k^2 z^2} \Big) \eta_\lambda(z)
	&=	\lam \eta_\lambda(z) , \label{eq:eta-ode-2} \\
\eta_\lambda(0^k / ak)
	&=	0 , &
	&\text{if $0$ is regular or exit} , \label{eq:eta-bc-2-0} \\
\eta_\lambda(L^k / ak)
	&=	0  . \label{eq:eta-bc-2-L}
\end{align}
For $k = 0$ the origin is a natural boundary by Proposition \ref{thm:origin}.  Hence, we have
\begin{align}
	\tfrac{1}{2} \eta_\lambda''(z) - \Big( \ee^{z} + \frac{a^2}{32}\Big) \eta_\lambda(z)
		&=	\lam \eta_\lambda(z) , \label{eq:eta-ode-2-0} \\
	\eta_\lambda(\tfrac{\log L}{a})
		&=	0  . \label{eq:eta-bc-L-0}
\end{align}
Notice that when $k \leq 0$, after the Liouville transformation $z=p(x)$ as in \eqref{eq:Liouville-transformation} and \eqref{eq:Liouville-transformation-0}, the origin turns into $-\infty$, which plays a role in the following decomposition of spectrum $\sig(\Act)$.
\begin{proposition}
\label{thm:spectrum}
Fix an interval $(0,L)$ and consider the operator $\Act$ given by \eqref{eq:Act-example}. 
For $k > 0$ the spectrum is purely discrete, i.e. $\sig(\Act) = \sig_d(\Act) \subset (-\infty, 0)$, for $k < 0$ the spectrum is purely continuous $\sig(\Act) = \sig_c(\Act) = (-\infty, 0]$, and for $k=0$ the spectrum is mixed, with discrete portion $\sig_d(\Act) \subset (-\tfrac{a^2}{32}, 0]$ clustering at $-\tfrac{a^2}{32}$ and continuous portion $\sig_{c}(\Act) = (-\infty, -\tfrac{a^2}{32}]$.
\end{proposition}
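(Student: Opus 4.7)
The plan is to reduce the spectral problem for $\Act$ to a one-dimensional Schr\"odinger problem via the Liouville transformation introduced in the preceding proposition, and then to classify the spectrum case by case by combining Weyl's theorem on essential spectra with the a priori bound $\sig(\Act) \subset (-\infty, 0]$. The latter bound follows from the identity $(\Ac - x)(\ee^{-f}\psi) = \ee^{-f} \Act \psi$, a short computation that uses the ODE \eqref{eq:f-ode} satisfied by $f$ and shows that $\Act$ is conjugate to the generator $\Ac - x$ of the sub-Markov semigroup of the killed short rate. Since the Liouville map is an isometry (up to a multiplicative constant) from $L^2((0,L), \mt\,\dd x)$ onto $L^2((p(0),p(L)), \dd z)$, we have $\sig(\Act) = \{-E : E \in \sig(H)\}$, where $H := -\tfrac{1}{2}\d_z^2 + V$ carries a Dirichlet condition at $p(L)$ and the appropriate self-adjoint realization at the singular endpoint $p(0)$.

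First I would identify the transformed intervals and the behavior of $V$ at the singular endpoint. For $k > 0$, the image $(p(0),p(L)) = (0, L^k/(ak))$ is bounded and $V(z) \sim (1+4k)/(32 k^2 z^2)$ produces a positive inverse-square barrier at $z = 0$, placing $z=0$ in Weyl's limit-point class. For $k < 0$, the interval is $(-\infty, L^k/(ak))$ and both contributions to $V$ vanish as $z \to -\infty$, so $V \to 0$. For $k = 0$, the interval is $(-\infty, (\log L)/a)$ and $V(z) = \ee^{az} + a^2/32 \to a^2/32$ as $z \to -\infty$.

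Weyl's theorem on the stability of the essential spectrum then gives essential spectra for $H$ equal to $\emptyset$, $[0,\infty)$, and $[a^2/32,\infty)$ respectively. Combining this with $\sig(H) \subset [0,\infty)$: for $k > 0$, $\sig(H)$ is purely discrete and accumulates only at $+\infty$, hence $\sig(\Act) = \sig_d(\Act) \subset (-\infty, 0)$ is purely discrete (the endpoint $0$ is excluded because no nonzero constant satisfies the Dirichlet condition at $L$). For $k < 0$, the pointwise nonnegativity $V \geq 0$ when $-1/4 < k < 0$, and a Bargmann-type decay estimate when $k \leq -1/4$, preclude eigenvalues below the essential spectrum, so $\sig(\Act) = \sig_c(\Act) = (-\infty, 0]$ is purely continuous. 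For $k = 0$, $\sig_c(\Act) = (-\infty, -a^2/32]$ follows immediately.

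The main obstacle will be isolating the discrete portion of the spectrum in the $k = 0$ case. I would tackle this by the substitution $t = (2\sqrt{2}/a)\ee^{az/2}$, which reduces the Schr\"odinger equation to the modified Bessel equation $\eta_{tt} + t^{-1}\eta_t - (1 + \nu^2/t^2)\eta = 0$ with $\nu^2 = 1/4 + 8\lam/a^2$; the order $\nu$ is real for $\lam > -a^2/32$ and purely imaginary for $\lam < -a^2/32$. Matching the unique $L^2$-admissible solution near $t = 0$ (where $\dd z$ becomes $2\,\dd t/(at)$) with the Dirichlet condition at $t_L = (2\sqrt{2L})/a$ produces the eigenvalue equation, and the accumulation of eigenvalues at the edge of the continuous spectrum corresponds to $\nu \to 0^+$, i.e.\ $\lam \to -(a^2/32)^+$. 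The most delicate step is the threshold analysis at $\nu = 0$, where the oscillatory and exponential regimes meet.
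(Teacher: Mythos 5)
Your route is genuinely different from the paper's. The paper classifies each endpoint as oscillatory or non-oscillatory (citing Linetsky) and then invokes the ``spectral category I/II'' dichotomy wholesale, whereas you compute the essential spectrum of the Liouville-transformed operator $H=-\tfrac12\d_z^2+V$ directly from the behavior of $V$ at the singular endpoint and combine it with the a priori bound $\sig(\Act)\subset(-\infty,0]$. For $k>0$ and $k<0$ this is sound and more self-contained, with two small caveats: (i) for $k>1/2$ the inverse-square coefficient $2c=(1+4k)/(16k^2)$ falls below the limit-point threshold $3/4$, so $z=0$ is limit-circle rather than limit-point (consistent with the origin being a regular boundary in Proposition \ref{thm:origin}); the spectrum is still discrete once the Dirichlet condition is imposed, but your stated reason is wrong on that subcase; (ii) the Bargmann-type estimate for $k\le -1/4$ is redundant, since $\sig(H)\subset[0,\infty)=\sig_{ess}(H)$ already leaves no room for eigenvalues below the essential spectrum.

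The genuine failure is at $k=0$. There $V(z)=\ee^{az}+a^2/32\ge a^2/32$ pointwise on the whole transformed interval, so the same quadratic-form argument you use for $-1/4<k<0$ gives $H\ge a^2/32=\inf\sig_{ess}(H)$: there are \emph{no} eigenvalues below the essential spectrum, i.e.\ $\sig_d(\Act)=\emptyset$. Your own Bessel reduction confirms this: the unique $L^2$-admissible solution near $t=0$ is $I_\nu(t)$ with $\nu=\sqrt{1/4+8\lam/a^2}>0$, and $I_\nu(t_L)>0$ for every $t_L>0$ (all terms of its series are positive), so the Dirichlet condition at $t_L$ is never satisfied and the eigenvalue equation has no solutions in $(-a^2/32,0]$. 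Hence the ``clustering at $-a^2/32$'' you set out to verify by the threshold analysis cannot be established -- carried out honestly, that analysis refutes it. (The paper's own proof has the same soft spot: the category-II theorem only asserts that the discrete part lies in $(-\Lambda,0]$ and \emph{may} either cluster at $-\Lambda$ or be finite, depending on a refined criterion on $z^2(V-\Lambda)$ near the oscillatory endpoint; since $V-\Lambda=\ee^{az}\to 0^+$ here, one is in the finite -- in fact empty -- branch, so the $k=0$ clause should read $\sig(\Act)=\sig_c(\Act)=(-\infty,-a^2/32]$.) As a proof of the statement as written, your $k=0$ step therefore fails, though it correctly diagnoses what the statement should say.
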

\begin{proof}
Notice that the right endpoint $L$ is always regular. By Proposition \ref{thm:origin} in case $k > 0$ the origin is either exit or regular. According to \cite{linetskybook} regular and exit boundaries are always \textit{non-oscillatory}.  Therefore, both endpoints in this case are non-oscillatory and the operator $\Act$ is of the \textit{spectral category I} (i.e., its spectrum is purely discrete and lies in $(-\infty, 0]$).
\\[0.5em]
For $k \leq 0$, the origin is natural by Proposition \ref{thm:origin}. From \eqref{eq:Liouville-transformation} and \eqref{eq:Liouville-transformation-0} we have 
\begin{align}
	\lim_{x \to 0+}{p(x)} = -\infty,
\end{align}
i.e., the origin is transformed into $-\infty$ by Liouville transformation \eqref{eq:z-def}. Then, we investigate the potential $U(x)$ in \eqref{eq:Liouville-transformation} and \eqref{eq:Liouville-transformation-0}.  We have
\begin{align}
\Lambda &:=
	\lim_{x \to 0}{U(x)} = \begin{cases}
		\frac{a^2}{32}, &k = 0, \\
		0, &k < 0.
\end{cases}
\end{align}
As this limit is finite, the origin is an \textit{oscillatory} boundary with cutoff $-\Lambda$. This, together with $L$ being regular and thus non-oscillatory, implies that the operator $\Act$ is of the \textit{spectral category II}. Therefore for $\Lambda = 0$ ($k < 0$) it has purely continuous spectrum in $(-\infty, 0]$, while for $\Lambda = \frac{a^2}{32}$ ($k = 0$) it has continuous spectrum $\sig_c(\Act) = (-\infty, -\Lambda] = (-\infty, -\tfrac{a^2}{32}]$ and discrete spectrum in $(-\tfrac{a^2}{32}, 0]$ clustering at $-\tfrac{a^2}{32}$.
\end{proof}
\noindent For certain choices of $k$, we can express $\eta_\lam$, the solution of \eqref{eq:eta-ode-2} corresponding to eigenvalue $\lam$ in terms of special functions. We provide some examples here
\begin{align}
k
	&=	1/2 , &
\eta_\lam(z)
	&= \frac{\ee^{-\frac{az^2}{2\sqrt2}}}{\sqrt{z}} \Big(c_1 U\Big(\tfrac{\lambda}{\sqrt2 a}, 0; \tfrac{az^2}{\sqrt2}\Big) + c_2 L_{-\frac{\lambda}{\sqrt2 a}}^{(-1)}\Big(\tfrac{az^2}{\sqrt2}\Big)\Big) , \label{eq:eta-k12-solution} \\
k 
	&= 0 , & 
\eta_\lam (z)
	&= c_1 I_{-\tfrac{1}{2}\sqrt{a^2+32\lambda}}(2\sqrt{2} \ee^{z/2}) + c_2 I_{\tfrac{1}{2}\sqrt{a^2+32\lambda}}(2\sqrt{2} \ee^{z/2}) , \\  
k
	&=	-1/4 , &
\eta_\lam(z)
	&= c_1 z \ee^{-\frac{ \sqrt2}{z}\left(\frac{16}{a^2}+z^2 \ii \sqrt{-\lambda}  \right)} \text{HeunD}\Big(\tfrac{64}{a^2} \ii \sqrt{-\lambda},-2 \ii\sqrt{-2\lambda },\tfrac{32}{a^2}
	\sqrt{2},2,-2 \ii \sqrt{-2\lambda };z\Big) \\ &&
	&\quad + c_2 z \ee^{\frac{ \sqrt2}{z}\left(\frac{16}{a^2}+z^2  \ii \sqrt{-\lambda}  \right)}\text{HeunD}\Big(\tfrac{64}{a^2} \ii \sqrt{-\lambda},2 \ii\sqrt{-2\lambda },-\tfrac{32}{a^2}
	\sqrt{2},2,2 \ii \sqrt{-2\lambda };z\Big), \\
k
	&= -1/2 , &
\eta_\lam(z)
	&= \sqrt{z} \Big(c_1 J_{\frac{2 \sqrt2}{a}}\Big(-\ii \sqrt{2\lambda} z\Big) + c_2 Y_{\frac{2\sqrt2}{a}}\Big(-\ii \sqrt{2\lambda} z \Big) \Big), \label{eq:eta-k-neg12-solution}
\end{align}
where $U(\alpha, \beta; z)$ is the Tricomi's confluent hypergeometric function, $L^{(\alpha)}_n(z)$ is the generalized Laguerre polynomial, $\text{HeunD}(q,\alpha,\gamma,\delta,\epsilon;z)$ is the double-confluent Heun function \cite[HeunD function]{HeunD}, $J_\alpha(z)$ and $Y_\alpha(z)$ are Bessel functions, and $I_\alpha(z)$ is the modified Bessel function of the first kind. The constants $c_1$ and $c_2$ are chosen to satisfy the BCs \eqref{eq:eta-bc-0}, \eqref{eq:eta-bc-L} and \eqref{eq:eta-bc-L-0} together with normalization \eqref{eq:normalization} or \eqref{eq:normalization-continuous} for \eqref{eq:z-def} with respect to $\langle \cdot, \cdot \rangle_{\mt}$ where $\mt$ is the speed density given by \eqref{eq:speed-density}. 

\subsection{Example: $k = 1/2$, the origin is exit}
\label{subsec:k12}
For $k=1/2$, the SDE \eqref{eq:our-short-rate} of the short-rate $X$ becomes 
\begin{align}
\dd X_t
	&= \Ib{X_t \in (0,L)} a \sqrt{X_t}\dd W_t . \label{eq:k12-SDE}
\end{align}
In Figure \ref{fig:k12-trajectories}, using a standard Euler-Maruyama method, we plot four sample trajectories of the short-rate \eqref{eq:k12-SDE} under $\Pb$ (corresponding to generator $\Ac$ in \eqref{eq:A}) as well as the corresponding trajectories of the short-rate $X$ under $\Pbt$ (corresponding to generator $\Act$ in \eqref{eq:Act-example}). As the origin is exit and $L$ is regular, the short-rate $X$ can hit both endpoints $0$ and $L$.  However, the trajectories spend more time near the origin than they do near $L$ before reaching either due to the fact that $\sig(x) = a \sqrt{x} \to 0$ as $x \to 0$.

\begin{proposition}
\label{thm:k12-eigenfunctions}
For $k=1/2$ (dynamics of $X$ given by \eqref{eq:k12-SDE}) the origin is exit, the spectrum $\sig(\Act) = \sig_d(\Act)$ is purely discrete and the eigenfunctions $\psi_n(x) := \psi_{\lam_n}(x)$ \eqref{eq:psi-ode}, normalized as \eqref{eq:normalization}, have the following form 
\begin{align}
\psi_n(x)
	&= \sqrt{\frac{a}{2c_n}} x M\left(1 - \tfrac{\lambda_n}{a\sqrt2}, 2; -\tfrac{2\sqrt2}{a} x\right), \label{eq:k12-eigenfunctions} \\
c_n 
	&:= \int_{0}^{a}{x\ee^{\frac{2\sqrt2 x}{a}} \left(M\left(1 - \tfrac{\lam_n}{a\sqrt2}, 2; -\tfrac{2\sqrt2}{a} x\right)\right)^2 \dd x}, \label{eq:cn}
\end{align} 
where $M(\alpha, \beta; z)$ is a Kummer's confluent hypergeometric function, that is \cite[Chapter 13]{NIST:DLMF}
\begin{align}
M(\alpha, \beta ; z) 
	&:= \sum_{k=0}^{\infty}{\frac{\alpha^{(k)} z^k}{\beta^{(k)} k!}}, & 
\alpha^{(k)} 
	&:= \alpha(\alpha+1)\dots (\alpha+k-1), &
\alpha^{(0)}
	&:= 1, \label{eq:hypergeometric}
\end{align}
while the eigenvalues $\lam_n$ are solutions of  
\begin{align}
M\left(1-\tfrac{\lam_n}{a\sqrt2}, 2; -\tfrac{2\sqrt2}{a}L\right) 
	&= 0.\label{eq:k12-eigeneq}
\end{align}
Then, the transition density \eqref{eq:Gamma-pde} admits the representation
\begin{align}
\Gamt(t, x; T, y) 
	&= \ee^{\frac{2\sqrt2 y}{a}} \sum_{n=1}^{\infty}{\frac{\ee^{\lambda_n (T- t) }}{c_n} x M\left(1 - \tfrac{\lam_n}{a\sqrt2}, 2; -\tfrac{2\sqrt2}{a} x\right) M\left(1 - \tfrac{\lam_n}{a\sqrt2}, 2; -\tfrac{2\sqrt2}{a} y\right)   } . \label{eq:k12-Gamma}
\end{align}
\end{proposition}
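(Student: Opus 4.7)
The first two assertions are not new: Proposition \ref{thm:origin} specialized to $k = 1/2$ gives that the origin is exit, and Proposition \ref{thm:spectrum} gives that the spectrum of $\Act$ is purely discrete. The proposal therefore focuses on (i) producing the explicit form \eqref{eq:k12-eigenfunctions} of $\psi_n$, (ii) deriving the transcendental eigenvalue equation \eqref{eq:k12-eigeneq}, and (iii) specializing the general spectral representation \eqref{eq:Gamma} to obtain \eqref{eq:k12-Gamma}.

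Setting $k = 1/2$ in \eqref{eq:Act-example} collapses $\Act$ to $a\sqrt{2}\, x\, \d_x + \tfrac12 a^2 x\, \d_x^2$, so the eigenvalue equation \eqref{eq:psi-ode} reads $\tfrac12 a^2 x\, \psi_\lam'' + a\sqrt{2}\, x\, \psi_\lam' - \lam \psi_\lam = 0$. The plan is to make the ansatz $\psi_\lam(x) = x\, M(\alpha, 2; \beta x)$, compute $\psi_\lam'$ and $\psi_\lam''$, divide through by $x$, and change variable to $w = \beta x$. Matching the resulting ODE in $w$ term-by-term against Kummer's confluent hypergeometric equation $w M_{ww} + (2 - w) M_w - \alpha M = 0$ uniquely forces $\beta = -2\sqrt{2}/a$ from the $-w M_w$ term, and then $\alpha = 1 - \lam/(a\sqrt{2})$ from the constant-in-$w$ term, which is precisely the eigenfunction displayed in \eqref{eq:k12-eigenfunctions} up to a normalizing constant. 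The second independent solution of this ODE is singular at $0$ and is excluded by \eqref{eq:psi-bc-0}, which is automatically satisfied by the factor $x$ in the ansatz.

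Imposing the remaining boundary condition \eqref{eq:psi-bc-L} gives exactly the transcendental relation \eqref{eq:k12-eigeneq}. Since $\sig(\Act)$ is purely discrete by Proposition \ref{thm:spectrum}, the roots of \eqref{eq:k12-eigeneq} form a countable, simple, unbounded sequence $(\lam_n)_{n \in \Nb} \subset (-\infty, 0)$, and classical Sturm-Liouville theory makes $\{\psi_n\}$ an orthogonal basis of $L^2((0,L), \mt\, \dd x)$. To pin down the overall prefactor $\sqrt{a/(2c_n)}$, I would plug the speed density $\mt(x) = (2/(Ca^2))\, x^{-1}\, \ee^{2\sqrt 2 x/a}$ from \eqref{eq:mt} into $\langle \psi_n, \psi_n\rangle_\mt = 1$; after the factor of $x^2$ from $\psi_n^2$ cancels against the $x^{-1}$ in $\mt$, what remains is precisely the integral in \eqref{eq:cn}, and one fixes the arbitrary constant $C$ in \eqref{eq:scale-speed} so as to produce the stated prefactor.

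Finally, \eqref{eq:k12-Gamma} follows by specializing the general representation \eqref{eq:Gamma} to the purely discrete case (the continuous-spectrum integral is absent), inserting the normalized $\psi_n$ and the explicit $\mt(y)$, and collapsing the $y$-dependent factors to recover the prefactor $\ee^{2\sqrt 2 y/a}$. I do not expect a serious obstacle: the Kummer ansatz is a direct verification and the representation theorem from \cite{hansonyakovlev} does the analytical heavy lifting. The only delicate point is bookkeeping: the free constant $C$ shared by $\st$ and $\mt$ must be carried through consistently so that it cancels cleanly both in the normalization of $\psi_n$ and in the product $\mt(y)\psi_n(x)\psi_n(y)$ entering $\Gamt$.
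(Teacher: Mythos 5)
Your proposal is correct and reaches all the stated conclusions, but it takes a genuinely different route to the eigenfunctions than the paper does. The paper stays inside the framework it built in Section \ref{sec:pde}: it passes to the Liouville normal form via $z = p(x) = 2\sqrt{x}/a$, invokes the catalogued general solution \eqref{eq:eta-k12-solution} in terms of Tricomi's $U$ and a generalized Laguerre function, kills the $U$-branch with the boundary condition at $0$, maps back to the $x$ variable through \eqref{eq:z-def}, and then applies the Kummer transformation $M(\alpha,\beta;z) = \ee^{z}M(\beta-\alpha,\beta;-z)$ to land on \eqref{eq:k12-eigenfunctions}. You instead work directly with $\Act\psi_\lam = \lam\psi_\lam$ in the original variable, i.e.\ $\tfrac12 a^2 x \psi_\lam'' + a\sqrt2\, x\,\psi_\lam' = \lam\psi_\lam$, and match the ansatz $\psi_\lam(x) = x\,M(\alpha,2;\beta x)$ against Kummer's equation; this does force $\beta = -2\sqrt2/a$ and $\alpha = 1 - \lam/(a\sqrt2)$ and is a perfectly valid, more elementary and self-contained derivation that avoids both the Liouville transformation and the Kummer identity. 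The remaining steps (boundary condition at $L$ giving \eqref{eq:k12-eigeneq}, normalization against $\mt$, specialization of \eqref{eq:Gamma}) coincide with the paper's. One small point in your favor: your bookkeeping remark about the arbitrary constant $C$ is actually sharper than the paper's treatment. With $\mt(x) = \tfrac{2}{Ca^2}x^{-1}\ee^{2\sqrt2 x/a}$ the normalization gives $c = \sqrt{Ca^2/(2c_n)}$, which matches the stated prefactor $\sqrt{a/(2c_n)}$ only for $C = 1/a$ (the paper sets $C=1$ and its displayed $c$ does not quite follow from its own computation); but, exactly as you observe, $C$ cancels identically in the product $\mt(y)\psi_n(x)\psi_n(y)$, so \eqref{eq:k12-Gamma} is correct regardless. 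The only thing I would ask you to make explicit is the one-line verification that the second Kummer solution $x\,U(\alpha,2;\beta x)$ tends to a nonzero constant as $x\to 0^+$, so that the exit boundary condition \eqref{eq:psi-bc-0} really does exclude it.
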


\begin{proof}
See Appendix \ref{sec:k12-eigenfunctions}
\end{proof}

\noindent 
In Figure \ref{fig:eigenfunctions-12}, with $L=1, a=1$, we plot the first four normalized eigenfunctions \eqref{eq:k12-eigenfunctions}, corresponding to $\lam_1 = -2.16096, \lam_2 = -6.48742, \lam_3=-13.2721, \lam_4=-22.5243$. Smaller $\lam_n$ is associated with a more oscillatory eigenfunction.
\\[0.5em]
In Figure \ref{fig:gamma}, we plot the transition density $\Gamt(t,x;T,y)$ \eqref{eq:k12-Gamma} as a function of $y$ with $x = 0.5$ and $t = 0$ fixed, and $T$ going from $0.05$ to $0.3$ in steps of $0.05$. The area under the density curve $\int_{0}^{1}{\Gamt(x, t; T, y) \dd y}$ is less than $1$ for $T > 0$ because $X_T$ has a positive probability of being located at one of the endpoints.

\begin{remark}
\label{rem:k12-density}
The transition density $\Gamt$, which is needed to write the explicit solution \eqref{eq:u-explicit} of the pricing PDE \eqref{eq:u-pde}, is for the short-rate $X$ under the equivalent probability measure $\Pbt$ \eqref{eq:radon-nikodym}. Under $\Pbt$, from the expression \eqref{eq:Act-example} when $k=1/2$, $X$ has the following SDE
\begin{align}
\dd X_t 
	&= \Ib{X_t \in (0, L)}\Big( a\sqrt{2} X_t \dd t + a\sqrt{X_t} \dd W_t \Big) . \label{eq:k12-equivprob-SDE}
\end{align}
\end{remark}

\noindent 
To compute bond prices (and the prices of other financial derivatives), we must compute an expression for $f(x)$ \eqref{eq:f-explicit}.  From \eqref{eq:model-phi} we have
\begin{align}
f(x)
	&=	\begin{cases}-\frac{\sqrt{8 x^{2k+1}}}{a(2k+1)}, &k \neq -\frac{1}{2}, \\
	-\frac{\sqrt{2}}{a} \log x, &k = -\frac{1}{2}.
	\end{cases}  & \label{eq:f-function}
\end{align}
\begin{proposition}
For $k=1/2$ short-rate model \eqref{eq:k12-SDE} the bond price \eqref{eq:bond-yield} has the following analytic expression 
\begin{align}
B_t^T
	&= \ee^{-x (T-t)} + x \ee^{\frac{\sqrt2}{a} x} \sum_{n=0}^{\infty}{\frac{M\left(1 - \tfrac{\lam_n}{a\sqrt2}, 2; -\tfrac{2\sqrt2}{a} x\right)}{c_n}\int_{0}^{L}{ h(t, T; \xi, \lam_n) M\left(1 - \tfrac{\lam_n}{a\sqrt2}, 2; -\tfrac{2\sqrt2}{a} \xi\right) \dd \xi}  }, \label{eq:k12-bond} \\ 
h(t, T; \xi, \lam ) 
	&:= \frac{a^2 \xi  \ee^{\xi  \left(\frac{\sqrt{2}}{a}-(T-t)\right)} \left(-((\lam +\xi ) (t-T) ((\lam +\xi ) (t-T)-2))+2
	\ee^{(\lam +\xi ) (T-t)}-2\right)}{2 (\lam +\xi )^3},
\end{align}
where $c_n$ is defined as in \eqref{eq:cn}, $M(\alpha, \beta; z)$ is the Kummer confluent hypergeometric function \eqref{eq:hypergeometric}, and $\lam_n$ are solutions of \eqref{eq:k12-eigeneq}.
\end{proposition}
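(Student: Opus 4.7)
The plan is to specialize the general formula \eqref{eq:u-explicit} to $g \equiv 1$ and $k = 1/2$, substitute the explicit forms of $f$, $\Act$, $\Gamt$, and $\psi_n$ provided earlier, and evaluate the resulting $(s,\xi)$-integrals.

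First, \eqref{eq:f-function} at $k = 1/2$ gives $f(x) = -\sqrt{2}\,x/a$, so $\ee^{-f(x)} = \ee^{\sqrt{2}x/a}$ and the exponent appearing in $q$ is $-\xi(T-s) + f(\xi) = \alpha(s)\,\xi$ with $\alpha(s) := -(T-s) - \sqrt{2}/a$. From \eqref{eq:Act-example}, $\Act = a\sqrt{2}\,\xi\,\d_\xi + \tfrac{a^2\xi}{2}\,\d_\xi^2$, so a direct calculation yields $(\d_s + \Act)\ee^{\alpha(s)\xi} = \tfrac{\xi}{2}\bigl(a\alpha(s) + \sqrt{2}\bigr)^2 \ee^{\alpha(s)\xi}$, because the bracket $2 + 2\sqrt{2}a\,\alpha + a^2\alpha^2$ is a perfect square. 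The key simplification specific to $k = 1/2$ is that $a\alpha(s) + \sqrt{2} = -a(T-s)$, which collapses the source to $q(s,\xi) = \tfrac{a^2}{2}\,\xi\,(T-s)^2\,\ee^{-\xi(T-s) - \sqrt{2}\xi/a}$.

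Next, insert this $q$ together with the spectral expansion \eqref{eq:k12-Gamma} of $\Gamt$ into \eqref{eq:u-explicit}, interchange the sum and the integrals, and use the product formula $\psi_n(x)\psi_n(\xi) = \tfrac{a}{2c_n}\,x\xi\,M_n(x) M_n(\xi)$ (where $M_n(y) := M(1 - \lam_n/(a\sqrt{2}), 2; -2\sqrt{2}\,y/a)$), together with $\mt(\xi)$ from \eqref{eq:speed-density}, whose overall constant $C$ is pinned down by the normalization $\<\psi_n,\psi_n\>_\mt = 1$ and the definition \eqref{eq:cn} of $c_n$. The inner $s$-integral $\int_t^T (T-s)^2 \ee^{\lam_n(s-t) - \xi(T-s)}\,\dd s$ is elementary: the substitution $u = T-s$ reduces it to $\ee^{\lam_n(T-t)} \int_0^{T-t} u^2 \ee^{-(\lam_n+\xi)u}\,\dd u$, which is evaluated in closed form by the standard antiderivative of $u^2 \ee^{-\beta u}$ with $\beta = \lam_n + \xi$ (or two integrations by parts).

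Factoring $x\,\ee^{\sqrt{2}x/a}\,M_n(x)/c_n$ out of each term of the spectral sum leaves precisely $\int_0^L h(t,T;\xi,\lam_n) M_n(\xi)\,\dd\xi$ inside, once the factor $\ee^{-\xi(T-t)}$ is pulled through the $\xi$-exponentials and $(T-t)$ is rewritten as $-(t-T)$ to match the stated form of $h$. The main obstacle I anticipate is the careful bookkeeping of multiplicative constants: the $\sqrt{a/(2c_n)}$ prefactors of $\psi_n$, the $2/(a^2)$ factor in $\mt$, the $a^2/2$ from $q$, and the various $\ee^{\sqrt{2}\xi/a}$ contributions must combine to yield exactly the $a^2\xi/2$ prefactor and the $\ee^{\xi(\sqrt{2}/a - (T-t))}$ term in the stated kernel $h$. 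Once this bookkeeping is settled, the remaining steps---the perfect-square collapse giving $q$, the elementary $s$-integral, and the final exponential rearrangement---are mechanical.
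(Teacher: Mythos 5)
Your proposal follows the paper's proof essentially step for step: set $g\equiv 1$ in \eqref{eq:u-explicit}, specialize $f$ and $\Act$ to $k=1/2$, compute $q(t,x)=\tfrac{1}{2}a^2(T-t)^2 x\,\ee^{-x(T-t+\sqrt{2}/a)}$ (your perfect-square observation is exactly why the paper's \eqref{eq:k12-q} collapses so cleanly), substitute the spectral expansion \eqref{eq:k12-Gamma}, and perform the elementary $s$-integration to obtain the kernel $h$. The approach and all key computations match; the paper simply states the intermediate results that you derive in slightly more detail.
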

\begin{proof}
By definition \eqref{eq:bond-yield}, we let $g(x) = 1$ in \eqref{eq:u-explicit}. When $k=1/2$, the expressions for \eqref{eq:Act-example} and \eqref{eq:f-function} are 
\begin{align}
\Act
	&= a \sqrt{2} x \partial_x + \tfrac{1}{2} a^2 x \partial_x^2, &
f(x) 
	&= - \frac{\sqrt{2}}{a} x. \label{eq:k12-A-f}
\end{align}
Plugging \eqref{eq:k12-A-f} into \eqref{eq:q} results in 
\begin{align}
q(t, x) 
	&= \frac12  a^2 (T-t)^2 x \ee^{-x(T-t + \frac{\sqrt2}{a}) } .\label{eq:k12-q}
\end{align}
Finally, substituting \eqref{eq:k12-Gamma} and \eqref{eq:k12-q} into \eqref{eq:u-explicit} we get 
\begin{align} 
u(t, x)
	&= \ee^{-x(T-t)} \\
	&+\frac{a^2}{2}x \ee^{\frac{\sqrt{2} x}{a}}\sum_{n=0}^{\infty}{\frac{M\left(1 - \tfrac{\lam_n}{a\sqrt2}, 2; -\tfrac{2\sqrt2}{a} x\right)}{c_n} \int_{t}^{T}{\dd s \int_{0}^{L}{\dd \xi \ee^{\frac{\sqrt2 \xi}{a}+ \lam_n (s- t) -\xi(T-s)} M\left(1 - \tfrac{\lam_n}{a\sqrt2}, 2; -\tfrac{2\sqrt2}{a} \xi\right)   } }(T-s)^2 \xi  } .
\end{align}
Integrating with respect to $s$ first results in \eqref{eq:k12-bond}. 
\end{proof}

\noindent 
In Figures \ref{fig:bond} and \ref{fig:yield} we fix parameters $L=1,a=1$ and plot both the bond price $B_0^T$ \eqref{eq:k12-bond} and the corresponding yield curve $Y_0^T$ \eqref{eq:bond-yield} with $X_0 = x \in \{\frac{1}{3}, \frac{1}{2}, \frac{2}{3}\}$. Notice that the yield curve has an inverted pattern.

\subsection{Example: $k = -1/2$, the origin is natural}
\label{subsec:kneg12}
For $k = -1/2$, the SDE \eqref{eq:our-short-rate} of the short-rate $X$ becomes 
\begin{align}
\dd X_t
	&= \Ib{X_t \in (0, L)} \left( \tfrac{a^2}{2} X_t^2 \dd t + a X_t^{3/2} \dd W_t \right) . \label{eq:kneg12-SDE}
\end{align} 
In Figure \ref{fig:kneg12-trajectories}, with $a=L=1$, we plot sample trajectories of the short-rate \eqref{eq:kneg12-SDE} under $\Pb$ (corresponding to generator $\Ac$  in \eqref{eq:A}) as well as the corresponding trajectories for the short-rate $X$ under $\Pbt$ (corresponding to generator $\Act$ in \eqref{eq:Act-example}). Note that the short-rate $X_t$ can hit the right endpoint $L$, which is regular, while it cannot reach the origin, which is natural. 

\begin{proposition}
\label{thm:kneg12-eigenfunctions}
For $k=-1/2$ model \eqref{eq:kneg12-SDE} the origin is natural, the spectrum of $\Act$ \eqref{eq:Act-example} is purely continuous in $(-\infty, 0)$ and the improper eigenfunctions $\psi(\rho, x) := \psi_{\lam(\rho)}(x)$ \eqref{eq:psi-ode} normalized as \eqref{eq:normalization-continuous} have the following form 
\begin{align}
\psi(\rho, x) 
	&:= \frac{\sqrt{\frac{L\rho}{2}}\pi \csc\left(\tfrac{2\sqrt2 \pi}{a}\right)}{\left|K_{\frac{2\sqrt2}{a}}\left(\ii \tfrac{2\sqrt2 \rho}{a}\right)\right|x^{\frac{\sqrt2}{a}}} \left(J_{\frac{2\sqrt2}{a}}\left(\tfrac{2\sqrt2 \rho}{a}\right)J_{-\frac{2\sqrt2}{a}}\left(\tfrac{2\sqrt2 \rho}{a} \sqrt{\tfrac{L}{x}}\right) - J_{-\frac{2\sqrt2}{a}}\left(\tfrac{2\sqrt2 \rho}{a}\right)J_{\frac{2\sqrt2}{a}}\left(\tfrac{2\sqrt2 \rho}{a} \sqrt{\tfrac{L}{x}}\right)\right),\label{eq:kneg12-eigenfunctions}
\end{align}
where $J_\alpha(x)$ is the Bessel function of the first kind and $K_\alpha(x)$ is the modified Bessel function of the second kind. Then, the transition density \eqref{eq:Gamma-pde} admits the representation
\begin{align}
\Gamt(t,x;T,y)
	&= \frac{2}{a^2} y^{-2+\frac{2\sqrt2}{a}} \int_{0}^{\infty}{\ee^{-L\rho^2 (T-t) } \psi(\rho, x)\psi(\rho, y) \dd \rho }, 
	\label{eq:kneg12-Gamma}
\end{align}
\end{proposition}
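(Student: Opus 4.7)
My plan is to construct the improper eigenfunctions $\psi(\rho,\cdot)$ explicitly by solving the Liouville-transformed eigenvalue equation, imposing vanishing at the regular endpoint $L$, fixing the normalization via the behavior at the natural boundary at the origin, and then assembling the transition density through the spectral formula \eqref{eq:Gamma}. I will then substitute the pieces into the continuous-spectrum-only specialization of that formula using $\mt(y)$ from \eqref{eq:mt}.

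First, I specialize the Liouville data of \eqref{eq:Liouville-transformation} to $k = -1/2$. The change of variables becomes $z = p(x) = -2/(a\sqrt{x})$, which maps $(0,L)$ onto $(-\infty, -2/(a\sqrt{L}))$; a direct calculation yields the potential $V(z) = 4/(a^2 z^2) - 1/(8 z^2)$. The general solution of the ODE \eqref{eq:eta-ode-2} is given in \eqref{eq:eta-k-neg12-solution} as a linear combination of $\sqrt{z}\,J_{2\sqrt{2}/a}(-\ii\sqrt{2\lam}\,z)$ and $\sqrt{z}\,Y_{2\sqrt{2}/a}(-\ii\sqrt{2\lam}\,z)$. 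By Proposition \ref{thm:spectrum}, $\sig(\Act) = \sig_c(\Act) = (-\infty, 0]$, so I parametrize the spectrum by $\lam(\rho) = -L\rho^2$ for $\rho > 0$. With this parametrization one computes $-\ii\sqrt{2\lam}\,z = -(2\sqrt{2}\rho/a)\sqrt{L/x}$, so the Bessel arguments are real, which will be convenient for enforcing the real boundary condition and normalization.

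Second, I pass back from $\eta_\lam$ to $\psi_\lam$ via \eqref{eq:z-def}, using $\sqrt{\sig(x)\st(x)} \propto x^{1/4 - \sqrt{2}/a}$ from \eqref{eq:st} together with the $x^{-1/4}$ factor from $\sqrt{p(x)}$; these combine to produce the overall $x^{-\sqrt{2}/a}$ prefactor appearing in \eqref{eq:kneg12-eigenfunctions}. Since $\nu := 2\sqrt{2}/a$ is generic, I replace the $\{J_\nu, Y_\nu\}$ fundamental system by $\{J_\nu, J_{-\nu}\}$ via the connection formula $Y_\nu = (\cos(\nu\pi) J_\nu - J_{-\nu})/\sin(\nu\pi)$. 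The boundary condition $\psi(\rho, L) = 0$ then picks out the antisymmetric combination
\begin{equation*}
J_\nu\!\bigl(\tfrac{2\sqrt{2}\rho}{a}\bigr) J_{-\nu}\!\bigl(\tfrac{2\sqrt{2}\rho}{a}\sqrt{\tfrac{L}{x}}\bigr) - J_{-\nu}\!\bigl(\tfrac{2\sqrt{2}\rho}{a}\bigr) J_\nu\!\bigl(\tfrac{2\sqrt{2}\rho}{a}\sqrt{\tfrac{L}{x}}\bigr),
\end{equation*}
which vanishes identically at $x = L$ because $\sqrt{L/x} = 1$ there.

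Third, and this is the main obstacle, I must identify the scalar prefactor so that the orthogonality relation \eqref{eq:normalization-continuous} holds. I would follow the Weyl--Kodaira spectral theory for a Sturm--Liouville operator with one regular endpoint ($L$) and one limit-point singular endpoint (the natural origin). Concretely, I form the resolvent Green function as the bilinear product of the solution vanishing at $x = L$ (constructed above) with the unique-up-to-scale Weyl solution at the origin; the latter is most naturally written in terms of the modified Bessel function $K_{\nu}(\ii\cdot 2\sqrt{2}\rho/a)$ (related to the outgoing Hankel function by $K_\nu(\ii w) \propto H_\nu^{(2)}(w)$). The spectral measure is then extracted from the jump of this Green function across the continuous spectrum $\lam \in (-\infty, 0]$. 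The $\csc(2\sqrt{2}\pi/a)$ factor is inherited from the $Y_\nu$--to--$J_{-\nu}$ connection formula, while the $|K_{\nu}|$ in the denominator arises from the Wronskian of the two solutions. An alternative, more elementary route to the same constant is to take $x, x' \to 0^+$ asymptotics of the Bessel combinations, extract an oscillatory $\sin$ factor, and apply the distributional identity $\int_0^\infty \sin(\alpha\rho)\sin(\alpha'\rho)\,\dd\rho = (\pi/2)\,\delta(\alpha - \alpha')$ to match the $\delta(\rho - \rho')$ on the right-hand side of \eqref{eq:normalization-continuous}.

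Finally, since $\sig_d(\Act) = \emptyset$ in this case, the spectral representation \eqref{eq:Gamma} collapses to $\mt(y)\int_0^\infty \ee^{\lam(\rho)(T-t)}\psi(\rho,x)\psi(\rho,y)\,\dd\rho$; substituting $\lam(\rho) = -L\rho^2$ and $\mt(y) = (2/a^2)\,y^{-2 + 2\sqrt{2}/a}$ (taking $C = 1$ in \eqref{eq:mt}) yields exactly \eqref{eq:kneg12-Gamma}.
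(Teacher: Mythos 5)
Your proposal is correct and follows essentially the same route as the paper: Liouville transformation to a Bessel-type equation, transforming back to obtain the $x^{-\sqrt{2}/a}$-weighted $\{J_{\nu},J_{-\nu}\}$ system with $\nu=2\sqrt{2}/a$, selecting the antisymmetric combination via the boundary condition at $L$, and obtaining the normalization and transition density from the jump of the resolvent Green's function (built from the $K_{\nu}$ Weyl solution at the natural origin and the solution vanishing at $L$) across the branch cut $\lam=-L\rho^{2}$. The paper carries out the Wronskian and discontinuity computation explicitly via the inverse Laplace transform of \eqref{eq:densitylaplace}, whereas you leave that step as a sketch, but the underlying argument is the same.
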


\begin{proof}
See Appendix \ref{sec:kneg12-eigenfunctions}.
\end{proof}
\begin{remark}
Similar to Remark \ref{rem:k12-density}, we note that the transition density $\Gamt$ in \eqref{eq:kneg12-Gamma} is that of the short-rate $X$ \eqref{eq:kneg12-SDE}, but under the equivalent probability measure $\Pbt$ \eqref{eq:radon-nikodym}. Under $\Pbt$, from the expression \eqref{eq:Act-example}, when $k=-1/2$, the short-rate $X$ has the following dynamics
\begin{align}
\dd X_t 
	&= \Ib{X_t \in (0, L)}\Big( (\tfrac{1}{2}a^2+ a\sqrt{2})X_t^2\dd t + aX_t^{3/2}\dd W_t \Big) . \label{eq:kneg12-equivprob-SDE}
\end{align}
\end{remark}

\begin{remark}
From the proof Proposition \ref{thm:kneg12-eigenfunctions} together with \eqref{eq:normalization-continuous} we can derive that for $\gamma > 0$, $\rho_0 > 0$
\begin{align}
\delta(\rho - \rho_0)
	&= \frac{L\pi^2 \csc^2(\gamma \pi) \sqrt{\rho \rho_0}}{2|K_\gamma(\ii \gamma \rho) K_\gamma(\ii \gamma \rho_0)|} \\ 
	&\cdot \int_{0}^{L}{\frac{1}{x^\gamma} \left(J_\gamma (\gamma \rho)J_{-\gamma}(\gamma \rho \sqrt{\tfrac{L}{x}}) - J_{-\gamma}(\gamma \rho) J_\gamma(\gamma \rho \sqrt{\tfrac{L}{x}})\right)\left(J_\gamma (\gamma \rho_0)J_{-\gamma}(\gamma \rho_0 \sqrt{\tfrac{L}{x}}) - J_{-\gamma}(\gamma \rho_0) J_\gamma(\gamma \rho_0 \sqrt{\tfrac{L}{x}})\right) \dd x}, 
\end{align}
where we have let $\gamma := \frac{2\sqrt2}{a}$ in \eqref{eq:kneg12-eigenfunctions}. To our knowledge, this representation for the Dirac delta function is not present in contemporary literature.
\end{remark}

\noindent 
In Figure \ref{fig:kneg12-eigenfunctions}, with $a=l=1$ we plot four improper normalized eigenfunctions \eqref{eq:kneg12-eigenfunctions}, where we use the natural log scale for the $x$ axis. Notice that as $x \to 0$ eigenfunctions exhibit highly oscillatory behavior.
In Figure \ref{fig:gamma2} we plot the transition density $\Gam(t,x;T,y)$ \eqref{eq:kneg12-Gamma} as a function of $y$ with $x = 0.5$ and $t = 0$ fixed, and $T$ ranging from $0.05$ to $0.3$ with step $0.05$. The transition density goes to $0$ in the neighborhood of the origin, as expected due to the origin being natural. The area under the density curve $\int_{0}^{1}{\Gamt(x, t, T, y) \dd y}$ is less than $1$ for $T > 0$, because there is positive probability that $X_T = L$, which is regular endpoint.

\begin{proposition}
For $k = -1/2$ corresponding to short-rate model \eqref{eq:kneg12-SDE}, the bond price \eqref{eq:bond-yield} has the following analytic expression 
\begin{align}
B_t^T
	&= \ee^{-x(T-t)} + \frac{L}{2} \pi^2 \csc^2\left(\tfrac{2\sqrt2 \pi}{a}\right) \int_{0}^{L}\dd \xi \int_{0}^{\infty}\dd\rho \, {h(t, T, \xi, \rho) \theta(\rho, x) \theta(\rho, \xi) \rho }, \label{eq:kneg12-bond} \\
\theta(\rho, x)
	&:= \left|K_{\frac{2\sqrt2}{a}}\left(\ii \tfrac{2\sqrt2 \rho}{a}\right)\right|^{-1} \left(J_{\frac{2\sqrt2}{a}}\left(\tfrac{2\sqrt2 \rho}{a}\right)J_{-\frac{2\sqrt2}{a}}\left(\tfrac{2\sqrt2 \rho}{a} \sqrt{\tfrac{L}{x}}\right) - J_{-\frac{2\sqrt2}{a}}\left(\tfrac{2\sqrt2 \rho}{a}\right)J_{\frac{2\sqrt2}{a}}\left(\tfrac{2\sqrt2 \rho}{a} \sqrt{\tfrac{L}{x}}\right)\right), \\
h(t, T, \xi, \rho)
	&:= \tfrac{\left(L \rho ^2+\xi \right) \ee^{L \rho ^2 (t-T)}-\ee^{\xi  (t-T)} \left(L^2 \rho ^4 (t-T) (\xi  t-\xi  T+1)+L \rho ^2 \left(1-2 \xi ^2 (t-T)^2\right)+\xi +\xi ^3
	(t-T)^2+\xi ^2 (T-t)\right)}{\left(\xi -L \rho ^2\right)^3}, 
\end{align}
where $J_\alpha(x)$ is the Bessel function of the first kind and $K_\alpha(x)$ is the modified Bessel function of the second kind.
\end{proposition}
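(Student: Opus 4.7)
The plan is to mirror the proof of the preceding proposition for $k = 1/2$. Starting from the general representation \eqref{eq:u-explicit} with $g \equiv 1$ (the zero-coupon bond payoff), I specialize every ingredient to $k = -1/2$ and carry out the same reduction.

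First, read off the concrete form of $\Act$ from \eqref{eq:Act-example} and of $f$ from \eqref{eq:f-function}:
\begin{align}
\Act &= \bigl(\tfrac{a^{2}}{2} + a\sqrt{2}\bigr) x^{2}\d_x + \tfrac{a^{2}}{2} x^{3}\d_x^{2}, &
f(x) &= -\tfrac{\sqrt{2}}{a}\log x, &
\ee^{-f(x)} &= x^{\sqrt{2}/a}.
\end{align}
Substituting these into \eqref{eq:q} with $g \equiv 1$ gives the source term
\begin{align}
q(s, \xi) = (\d_s + \Act)\bigl[\xi^{-\sqrt{2}/a}\ee^{-\xi(T-s)}\bigr],
\end{align}
which, after elementary differentiation in $s$ and $\xi$, collapses to $\xi^{-\sqrt{2}/a}\ee^{-\xi(T-s)}$ times an explicit low-degree polynomial in $(T - s)$ and $\xi$.

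Second, plug the spectral representation \eqref{eq:kneg12-Gamma} of $\Gamt$ and the explicit $q$ into \eqref{eq:u-explicit}, and interchange the $s$-integral with the $\rho$- and $\xi$-integrals; this is justified by absolute integrability, since the continuous eigenvalues $\lam(\rho) = -L\rho^{2}$ are strictly negative. The $s$-integrand then has the shape
\begin{align}
\ee^{-L\rho^{2}(s - t)}\ee^{-\xi(T-s)}\cdot(\text{polynomial in } T - s),
\end{align}
so integration in $s$ on $[t, T]$ is a routine exponential-times-polynomial calculation. Combining the exponentials and integrating by parts yields an antiderivative whose denominator is $(\xi - L\rho^{2})^{3}$, which matches the denominator in the claimed formula for $h(t, T, \xi, \rho)$; inspection of the numerator terms produced by the by-parts procedure reproduces the full numerator.

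Third, collect the remaining algebraic factors. The $\xi^{-\sqrt{2}/a}$ coming from $q$ combines with the speed-density prefactor $\tfrac{2}{a^{2}}\xi^{-2 + 2\sqrt{2}/a}$ from \eqref{eq:kneg12-Gamma}; the $\ee^{-f(x)} = x^{\sqrt{2}/a}$ in front of the integral in \eqref{eq:u-explicit} absorbs the analogous $x$-factor in the normalization of $\psi(\rho, x)$; finally the $\sqrt{L\rho/2}$, the $1/|K_{2\sqrt{2}/a}|$ and the $\csc(2\sqrt{2}\pi/a)$ prefactors from \eqref{eq:kneg12-eigenfunctions} combine with their counterparts from $\psi(\rho,\xi)$ to produce the stated constant $\tfrac{L}{2}\pi^{2}\csc^{2}(2\sqrt{2}\pi/a)$ together with the dimensionless kernel $\theta(\rho, x)\theta(\rho, \xi)$ and the Jacobian factor $\rho$.

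The main obstacle is purely bookkeeping: one must verify that the various fractional $x$- and $\xi$-powers arising from $f$, from the speed density $\mt$, and from the normalization of $\psi(\rho, \cdot)$ collapse exactly to the claimed form, and that the polynomial coefficients generated by the $s$-integration reassemble into the somewhat elaborate numerator of $h(t, T, \xi, \rho)$. There is no analytic difficulty beyond this careful collection of constants and the elementary $s$-integral.
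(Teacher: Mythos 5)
Your proposal follows essentially the same route as the paper's own proof: set $g \equiv 1$ in \eqref{eq:u-explicit}, specialize $\Act$ and $f$ to $k=-1/2$, compute $q$ explicitly, substitute the spectral representation \eqref{eq:kneg12-Gamma}, perform the elementary $s$-integration to produce $h$, and collect the powers of $x$, $\xi$ and the normalization constants into the stated prefactor and kernel $\theta(\rho,x)\theta(\rho,\xi)$. The bookkeeping you describe (the $\xi^{2-\sqrt2/a}$ from $q$ combining with the speed density $\tfrac{2}{a^2}\xi^{-2+2\sqrt2/a}$, and $\ee^{-f(x)}=x^{\sqrt2/a}$ cancelling the $x^{-\sqrt2/a}$ in $\psi(\rho,x)$) is exactly what the paper does, so the argument is correct and matches.
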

\begin{proof}
By definition \eqref{eq:bond-yield}, we let $g(x) = 1$ in \eqref{eq:u-explicit}. When $k=-1/2$, the expressions for \eqref{eq:Act-example} and \eqref{eq:f-function} become
\begin{align}
\Act 
	&= \Big(\frac{a^2}{2} + a \sqrt{2} \Big)x^2 \partial_x  + \frac{a^2}{2} x^3 \partial_x^2, &
f(x)
	&= -\frac{\sqrt2}{a} \log x. \label{eq:kneg12-A-f}
\end{align}
Plugging \eqref{eq:kneg12-A-f} into \eqref{eq:q} results in 
\begin{align}
q(t, x) 
	&= \frac{a^2}{2} \ee^{-(T-t)x} x^{2-\frac{\sqrt2}{a}}(t-T)(1+tx-Tx). \label{eq:k-neg12-q}
\end{align}
Finally, substituting \eqref{eq:kneg12-Gamma} and \eqref{eq:k-neg12-q} into \eqref{eq:u-explicit} we obtain 
\begin{align}
u(t, x) 
	&= \ee^{-x(T-t)} + \int_t^T \dd s \int_0^L \dd \xi \int_{0}^{\infty} \dd \rho \, \ee^{-L \rho^2(s-t)-(T-s)\xi} (s-T)(1+s\xi - T\xi)x^{\frac{\sqrt2}{a}}\psi(\rho, x)\xi^{\frac{\sqrt2}{a}}\psi(\rho, \xi) . \label{eq:kneg12-u-explicit}
\end{align}
Integrating with respect to $s$ first in \eqref{eq:kneg12-u-explicit} and simplifying the expression results in \eqref{eq:kneg12-bond}.
\end{proof}

\noindent 
In figures \ref{fig:bond2} and \ref{fig:yield2}, with $a=L=1$ fixed, we plot both the bond price $B_0^T$ \eqref{eq:kneg12-bond} and the corresponding yield curve $Y_0^T$ \eqref{eq:bond-yield} where $X_0 = x \in \{\frac{1}{3}, \frac{1}{2}, \frac{2}{3}\}$.  For a higher value of $x$ the yield curve has a humped shape at the beginning, before decreasing, while for a lower value of $x$ it appears to be more flat.


\section*{Acknowledgments}
Aram Karakhanyan acknowledges partial support from EPSRC grant EP/S03157X/1 Mean curvature measure of free boundary.

%
%

\appendix

%
%

\section{Proof of Proposition \ref{thm:origin}}
\label{sec:origin}
Consider the \textit{scale measure}
\begin{align}
\St[x, y] 
	&:= \int_{x}^{y}{\st (z) \dd z}, \quad x, y \in (0, L), \label{eq:scale-measure}
\end{align}
where $\st (x)$ is a scale density \eqref{eq:st}, together with the limiting case at the origin
\begin{align}
\St(0, z] 
	&= \lim_{x \to 0+} S[x, z] \label{eq:scale-boundary}, \quad z \in (0, L). 
\end{align}
The boundary classification of the origin depends on the following integrals
\begin{align}
I_0 
	&= \int_{0}^{\varepsilon}{\St(0, z]\mt (z)\dd z}, \label{eq:I-integral} \\
J_0 
	&= \int_{0}^{\varepsilon}{\St[z, \varepsilon]\mt (z)\dd z}, \label{eq:J-integral} 
\end{align}
where $\varepsilon \in (0,L)$ and $\mt(x)$ is a speed density \eqref{eq:mt}. In particular, according to \cite{linetskybook}, the origin is 
\begin{enumerate}
	\item \textit{regular} if $I_0 < \infty$ and $J_0 < \infty$,
	\item \textit{exit} if $I_0 < \infty$ and $J_0 = \infty$,
	\item \textit{natural} if $I_0 = \infty$ and $J_0 = \infty$.
\end{enumerate}
Thus, in order to classify the origin, it is sufficient to determine whether each of the integrals $I_0$ and $J_0$ converges or diverges. 
\\[0.5em]
For simplicity, denote 
\begin{align}
w(x) 
	&= \frac{\sqrt2}{a(k+\frac12)}x^{k+\frac12},
\end{align}
then evaluating \eqref{eq:scale-measure} we get the expression for the scale measure 
\begin{align}
\St[x, y] 
	&= \tfrac{Ca}{2\sqrt2} \begin{cases}
		\exp(-2w(x)) - \exp(-2w(y)), &k \neq -\frac{1}{2}, \\
		x^{-\frac{2\sqrt2}{a}} - y^{-\frac{2\sqrt2}{a}}, & k = -\frac{1}{2}.
	\end{cases}\label{eq:scale-measure-explicit}
\end{align}
Next, we compute $\St(0, \varepsilon]$ as in \eqref{eq:scale-boundary}
\begin{align}
\St(0, z] 
	&= \begin{cases}
		\frac{Ca}{2\sqrt2} \left(1 - \exp\left(-2w(z)\right) \right), &k > -\frac12, \\
		\infty, &k \leq -\frac12.
	\end{cases}\label{eq:scale-measure-limit}
\end{align}
Then by plugging \eqref{eq:mt} and \eqref{eq:scale-measure-limit} into \eqref{eq:I-integral} for $k > -\frac{1}{2}$ we get 
\begin{align}
I_0 
	&=  \tfrac{1}{a\sqrt2} \int_{0}^{\varepsilon}{z^{k-\frac32}\left(\exp\left(\tfrac{2\sqrt2}{a(k+\frac12)} z^{k+\frac12}\right) - 1\right) \dd z}.
\end{align}
Using the asymptotic equivalence 
\begin{align}
\exp\left(\tfrac{2\sqrt2}{a(k+\frac12)} z^{k+\frac12}\right) - 1 
	&\sim \tfrac{2\sqrt2}{a(k+\frac12)} z^{k+\frac12} \quad \text{ as }z \to 0, 
\end{align}
it follows that convergence of $I_0$ coincides with convergence of 
\begin{align}
	\tfrac{2}{a^2(k+\frac12)}\int_{0}^{\varepsilon}{z^{2k-1} \dd z}
\end{align}
but the latter integral converges for $k > 0$ and diverges for $k \in (-\frac12, 0]$. Together with \eqref{eq:scale-measure-limit} for $k \leq -\frac{1}{2}$ this results in 
\begin{align}\label{eq:cases-I0}
	\begin{cases}
		I_0 <\infty, &k > 0, \\
		I_0 = \infty, &k \leq 0.
	\end{cases}
\end{align}
Plugging \eqref{eq:mt} and \eqref{eq:scale-measure-explicit} into \eqref{eq:J-integral} for $k\neq -\frac12$
\begin{align}
J_0 
	&= \tfrac{1}{a\sqrt2} \int_{0}^{\varepsilon}{\left(1 - \exp\left(\tfrac{2\sqrt2}{a(k+\frac12)} \left(z^{k+\frac12} - \varepsilon^{k+\frac12}\right)\right)\right)z^{k-\frac32} \dd z}
\end{align}
Notice that since $z^{k+\frac12} \to 0$ as $z \to 0$ for $k > -\frac{1}{2}$, and $z^{k+\frac12} \to \infty$ for $k < -\frac{1}{2}$, so we have asymptotic equivalence 
\begin{align}
\left(1 - \exp\left(\tfrac{2\sqrt2}{a(k+\frac12)} \left(z^{k+\frac12} - \varepsilon^{k+\frac12}\right)\right)\right)z^{k-\frac32} 
	\sim D z^{k-\frac32} \quad \text{ as } z \to 0,
\end{align} 
where 
\begin{align}
D 
	&= \begin{cases}
	\left(1 - \exp\left(-\tfrac{2\sqrt2}{a(k+\frac12)}\varepsilon^{k+\frac12}\right)\right), &k > -\frac{1}{2}, \\
	1, &k < -\frac{1}{2}.
\end{cases}.
\end{align}
Thus convergence of $J_0$ \eqref{eq:J-integral} coincides with convergence of 
\begin{align}
\frac{D}{a\sqrt2} \int_{0}^{\varepsilon}{z^{k-\frac32} \dd z},
\end{align}
which converges for $k > -\frac12$, so $J_0 < \infty$ in this case, and diverges for $k < -\frac12$, so $J_0 = \infty$. For $k = -\frac12$  
\begin{align}
J_0 
	&= \tfrac{1}{a \sqrt2}\int_{0}^{\varepsilon}{\left(z^{-\frac{2\sqrt2}{a}} - \varepsilon^{-\frac{2\sqrt2}{a}}\right)z^{-2+\frac{2\sqrt2}{a}} \dd z} = \tfrac{1}{a\sqrt2} \int_{0}^{\varepsilon}{\left(z^{-2} - \varepsilon^{-\frac{2\sqrt{2}}{a}} z^{-2+\frac{2\sqrt2}{a}} \right)\dd z} =\infty, 
\end{align}
from asymptotic equivalence 
\begin{align}
z^{-2} -  \varepsilon^{-\frac{2\sqrt{2}}{a}} z^{-2+\frac{2\sqrt2}{a}} 
	&= z^{-2}\left(1-\varepsilon^{-\frac{2\sqrt{2}}{a}} z^{\frac{2\sqrt2}{a}}\right)  \sim z^{-2}, \quad \text{ as } z \to 0. 
\end{align}
Thus, we get 
\begin{align}
	\begin{cases}
		J_0 < \infty, &k > \frac{1}{2}, \\
		J_0 = \infty, &k \leq \frac{1}{2}.
	\end{cases}
\end{align}
This, combined with \eqref{eq:cases-I0} proves the proposition. 

%
%

\section{Proof of Proposition \ref{thm:k12-eigenfunctions}}
\label{sec:k12-eigenfunctions}

From Proposition \ref{thm:origin} the origin is an exit boundary, and 
from Proposition \ref{thm:spectrum} the spectrum is purely discrete. Plugging $k=1/2$ into \eqref{eq:mu-sig-example}, \eqref{eq:st}, \eqref{eq:mt} with $C=1$, and \eqref{eq:Liouville-transformation} we get 
\begin{align}
\sig(x)
	&= a\sqrt{x}, & 
\st(x)
	&= \ee^{-\frac{2\sqrt2}{a} x}, & 
\mt(x)
	&= \frac{2\ee^{\frac{2\sqrt2}{a} x}}{a^2 x}, &
p(x) 
	&= \frac{2\sqrt{x}}{a}. \label{eq:k12-sigma-st-mt-p}
\end{align}
Thus, from \eqref{eq:eta-ode-2} and \eqref{eq:k12-sigma-st-mt-p} we derive the following equation 
\begin{align}
\eta_\lam''(z)
	&=	\frac1{4z^2}(8 \lam  z^2 + 2a^2 z^{4} +3)\eta_\lam(z), &
\eta_\lam(0) 
	&= 0, &
\eta_\lam\left(\tfrac{2\sqrt{L}}{a}\right) 
	&= 0. 
\end{align}
Applying the first BC to the analytic solution \eqref{eq:eta-k12-solution} we get that $c_1 = 0$, and expanding $L^{(-1)}_{-\frac{\lam}{\sqrt{2a}}}\left(\tfrac{az^2}{2}\right)$, the expression for the eigenfunctions becomes 
\begin{align}
\eta_\lam(z) 
	&= c z^{\frac32} \ee^{-\frac{az^2}{2\sqrt2}} M\left(1+\tfrac{\lam}{a\sqrt2},2;\tfrac{a z^2}{\sqrt2}\right) ,
	\label{eq:potential-eigenfunctions}
\end{align}
where $c$ is a constant coefficient. Before applying the second BC, we return back to the original variable $x$ by plugging \eqref{eq:k12-sigma-st-mt-p} and \eqref{eq:potential-eigenfunctions}, into \eqref{eq:z-def}, resulting in 
\begin{align}
\psi_\lam(x) 
	&= c \ee^{-\frac{2\sqrt2}{a}x} x M\left(1 + \tfrac{\lam}{a\sqrt2}, 2; \tfrac{2\sqrt2}{a} x\right),\label{eq:k12-psi-temp}
\end{align}
and using the Kummer's transformation
\begin{align}
	M(\alpha,\beta; z) = \ee^{z} M(\beta-\alpha,\beta; -z),
\end{align}
\eqref{eq:k12-psi-temp} becomes 
\begin{align}
\psi_\lam (x) 
	&= c x M\left(1 - \tfrac{\lam}{a\sqrt2}, 2; -\tfrac{2\sqrt2}{a} x\right).
\end{align}
Now, applying the second BC \eqref{eq:psi-bc-L}
\begin{align}
\psi_\lam (L) 
	&= c L \cdot M\left(1-\tfrac{\lam}{a\sqrt2}, 2; -\tfrac{2\sqrt2}{a}L\right) = 0.
\end{align}
Thus, the eigenvalues $\lam_n$ are solutions of this equation, or of \eqref{eq:k12-eigeneq} equivalently. The constant $c$ for each $\psi_{n} = \psi_{\lam_n}$ is determined from normalization \eqref{eq:normalization} with $\mt$ \eqref{eq:k12-sigma-st-mt-p}.
\begin{align}
\|\psi_n\|^2_{\mt} 
	&= \langle \psi_n, \psi_n \rangle_{\mt} = \frac{2 c^2 }{a^2}\int_{0}^{L}{\ee^{\frac{2\sqrt2}{a} x}x\left(M\left(1-\tfrac{\lam_n}{a\sqrt2}, 2; -\tfrac{2\sqrt2}{a}L\right)\right)^2 \dd x} = 1,
\end{align}
so if we introduce $c_n$ \eqref{eq:cn}, then 
\begin{align}
c 
	&= \sqrt{\frac{a}{2c_n}},
\end{align}
which produces \eqref{eq:k12-eigenfunctions}. Finally, since the spectrum is purely discrete, i.e. $\sig(\Act) = \sig_{p}(\Act)$ and $\sig_c(\Act) = 0$, the expression for the transition density $\Gamt(t, x; T, y)$ \eqref{eq:k12-Gamma} follows from the general spectral representation \eqref{eq:Gamma}.

%
%

\section{Spectral representation for the natural boundary case}
\label{sec:greens}

In practice, when the origin is natural and, consequently, the continuous spectrum is nonempty (see Proposition \ref{thm:spectrum}), the eigenfunction representation is written by finding the Green's function first \cite[Chapter 2]{borodin2002handbook}
\begin{align}
G_\lam (x, y) 
	&= \frac{\psi_\lam(\min(x,y)) \phi_\lam(\max(x,y))}{w_\lam}, & 
w_\lam 
	&:= \frac{1}{\st(x)}\left( \psi_\lam' (x) \phi_\lam (x) - \psi_\lam(x) \phi_\lam'(x) \right), \label{eq:greensf-wronskian}
\end{align}
where $\psi_\lam(x)$ and $\phi_\lam(x)$ are solutions of \eqref{eq:psi-ode} s.t. $\psi_\lam$ is increasing, while $\phi_\lam$ is decreasing and satisfies \eqref{eq:psi-bc-L}. Both functions should satisfy some additional conditions at the origin to ensure Green's function uniqueness, that is 
\begin{align}
\psi_\lam(0+) 
	&:= \lim_{x \to 0+}{\psi_\lam(x)} = 0, &
\phi_\lam(0+)
	&:= \lim_{x \to 0+}{\phi_\lam(x)} = +\infty, \\
\psi_\lam^+(0+) 
	&:= \lim_{h \to 0+}{\frac{\psi_\lam(0+h) - \psi_\lam(0+)}{S(0, h]}} = 0, &
\phi_\lam^+(0+)
	&= \lim_{h \to 0+}{\frac{\phi_\lam(0+h) - \phi_\lam(0+)}{S(0, h]}} = -\infty, 
\end{align}
where $S(0, h]$ is defined as in \eqref{eq:scale-boundary}. These conditions will typically be satisfied for $\psi_\lam$ and $\phi_\lam$ under the prior assumptions, but when they are not, one must change $\psi_\lam$ and $\phi_\lam$ accordingly. Under the aforementioned conditions on $\psi_\lam(x)$ and $\phi_\lam(x)$, the Wronskian $w_\lam$ \eqref{eq:greensf-wronskian} will be independent of $x$ which justifies the notation. 
\\[0.5em]
Then, applying inverse Laplace transform to \eqref{eq:greensf-wronskian} results in 
\begin{align}
\Gamt(t, x; T, y) 
	&= \frac{\mt(y)}{2\pi \ii } \int_{c - \ii \infty}^{c + \ii \infty}{\ee^{\lam (T -t )} G_\lam(x, y) \dd \lam },\label{eq:densitylaplace}
\end{align}
which produces a discrete part of \eqref{eq:Gamma} when the Green's function has isolated singularities, and a continuous part when there is a branch cut for $\lam$. 

%
%

\section{Proof of Proposition \ref{thm:kneg12-eigenfunctions}}
\label{sec:kneg12-eigenfunctions}

From Proposition \ref{thm:origin} the origin is a natural boundary, and from Proposition \ref{thm:spectrum} the spectrum is purely continuous in $(-\infty, 0)$. Plugging $k=-1/2$ into \eqref{eq:mu-sig-example}, \eqref{eq:st}, \eqref{eq:mt} with $C=1$, and \eqref{eq:Liouville-transformation} we get 
\begin{align}
\sig(x)
	&= a x^{3/2}, & 
\st(x)
	&= x^{-1-\frac{2\sqrt2}{a}}, &
\mt(x)
	&= \frac{2}{a^2} x^{-2+\frac{2\sqrt2}{a}}, &
p(x)
	&= -\frac{2}{a \sqrt{x}}. \label{eq:kneg12-sigma-st-mt-p}
\end{align}
From \eqref{eq:eta-ode-2} we derive the following equation 
\begin{align}
\eta'' 
	&=	\frac1{4z^2}(8 \lam  z^2 + 2a^2 z^{4} +3)\eta.
\end{align}
The general solution to this ODE is given by \eqref{eq:eta-k-neg12-solution}. First, we return back to the original variable $x$ by plugging \eqref{eq:eta-k-neg12-solution} and \eqref{eq:kneg12-sigma-st-mt-p} \eqref{eq:z-def}. This results in the general solution of the eigenvalue equation \eqref{eq:psi-ode}
\begin{align}
\psi_\lam(x) 
	&= c_1 x^{-\frac{\sqrt2}{a}}J_{\frac{2\sqrt2}{a}}\left(\tfrac{2\ii \sqrt{2}}{a} \sqrt{\tfrac{\lam}{x}}\right) + c_2 x^{-\frac{\sqrt2}{a}}Y_{\frac{2\sqrt2}{a}}\left(\tfrac{2\ii \sqrt{2}}{a} \sqrt{\tfrac{\lam}{x}}\right),\label{eq:kneg12-general-psi}
\end{align}
where $J_\alpha(z), Y_\alpha(z)$ are the Bessel functions. \eqref{eq:kneg12-general-psi} is a linear combination of two independent solutions 
\begin{align}
\psi_\lam^{(1)}
	&= x^{-\frac{\sqrt2}{a}}J_{\frac{2\sqrt2}{a}}\left(\tfrac{2\ii \sqrt{2}}{a} \sqrt{\tfrac{\lam}{x}}\right), & 
\psi_\lam^{(2)} 
	&= x^{-\frac{\sqrt2}{a}}Y_{\frac{2\sqrt2}{a}}\left(\tfrac{2\ii \sqrt{2}}{a} \sqrt{\tfrac{\lam}{x}}\right).
\end{align}
Following the approach described in Appendix \ref{sec:greens}, we need to construct two solutions $\psi_\lam(x)$ and $\phi_\lam(x)$ of \eqref{eq:psi-ode} s.t. for $\lam > 0$ $\psi_\lam(x)$ is increasing, while $\phi_\lam(x)$ is decreasing and satisfies the BC \eqref{eq:psi-bc-L}
\begin{align}
\psi_\lam(x) 
	&= \frac{\pi}{2}\ee^{\frac{\left(2\sqrt2 +a\right)\pi \ii}{2a}} \psi_\lam^{(1)}(x) +  \frac{\pi}{2}\ee^{\frac{\left(2\sqrt2 +a\right)\pi \ii}{2a}}  \ii \psi_\lam^{(2)}(x) = x^{-\frac{\sqrt2}{a}} K_{\frac{2\sqrt2}{a}}\left(\tfrac{2\sqrt2}{a}\sqrt{\tfrac{\lam}{x}}\right), \label{eq:psi}\\
\phi_\lam(x)
	&= \left(K_{\frac{2\sqrt2}{a}}\left(\tfrac{2\sqrt2}{a} \sqrt{\tfrac{\lam}{L}} \right) \ee^{-\frac{\sqrt2 \pi \ii}{2a}}  - I_{\frac{2\sqrt2}{a}}\left(\tfrac{2\sqrt2}{a} \sqrt{\tfrac{\lam}{L}}\right)\frac{\pi}{2}\ee^{\frac{\left(2\sqrt2 +a\right)\pi \ii}{2a}}\right)\psi_\lam^{(1)}(x) - I_{\frac{2\sqrt2}{a}}\left(\tfrac{2\sqrt2}{a} \sqrt{\tfrac{\lam}{L}}\right)\frac{\pi}{2}\ee^{\frac{\left(2\sqrt2 +a\right)\pi \ii}{2a}} \ii \psi_\lam^{(2)}(x) \\
	&= x^{-\frac{\sqrt2}{a}}\left(I_{\frac{2\sqrt2}{a}}\left(\tfrac{2\sqrt2}{a} \sqrt{\tfrac{\lam}{x}} \right)K_{\frac{2\sqrt2}{a}}\left(\tfrac{2\sqrt2}{a} \sqrt{\tfrac{\lam}{L}} \right) - I_{\frac{2\sqrt2}{a}}\left(\tfrac{2\sqrt2}{a} \sqrt{\tfrac{\lam}{L}} \right) K_{\frac{2\sqrt2}{a}}\left(\tfrac{2\sqrt2}{a} \sqrt{\tfrac{\lam}{x}} \right) \right), \label{eq:phi}
\end{align}	
where $I_\alpha(z), K_\alpha(z)$ are the modified Bessel functions, and we have used the following connection formulas between regular and modified Bessel functions \cite[Chapter 10]{NIST:DLMF} which hold for all $z \in \Rb$
\begin{align}
I_{\alpha}(z)
	&= \ee^{-\frac{\alpha \pi \ii}{2}}J_{\alpha}(z), &
K_\alpha(z)
	&= \frac{\pi}{2} \ee^{\frac{(\alpha+1)\pi \ii}{2}} \left(J_\alpha(\ii z) + \ii Y_\alpha(\ii z)\right)
\end{align}
Plugging \eqref{eq:kneg12-sigma-st-mt-p}, \eqref{eq:psi} and \eqref{eq:phi} into the Wronskian \eqref{eq:greensf-wronskian} results in 
\begin{align}
w_\lam 
	&= \frac{1}{2}K_{\frac{2\sqrt2}{a}}\left(\tfrac{2\sqrt2}{a}\sqrt{\tfrac{\lam}{L}}\right).\label{eq:kneg12-wronskian}
\end{align}
Plugging \eqref{eq:psi}, \eqref{eq:phi} and \eqref{eq:kneg12-wronskian} into \eqref{eq:greensf-wronskian} we compute the Green's function
\begin{align}
G_\lam (x, y) 
	&= \frac{2}{x^{\frac{\sqrt2}{a}}y^{\frac{\sqrt2}{a}}K_{\frac{2\sqrt2}{a}}\left(\tfrac{2\sqrt2}{a}\sqrt{\tfrac{\lam}{L}}\right)}\\
	&\cdot \begin{cases}
		K_{\frac{2\sqrt2}{a}}\left(\tfrac{2\sqrt2}{a}\sqrt{\tfrac{\lam}{x}}\right) \left[K_{\frac{2\sqrt2}{a}}\left(\tfrac{2\sqrt2}{a}\sqrt{\tfrac{\lam}{L}}\right)I_{\frac{2\sqrt2}{a}}\left(\tfrac{2\sqrt2}{a} \sqrt{\tfrac{\lam}{y}}\right) - I_{\frac{2\sqrt2}{a}}\left(\tfrac{2\sqrt2}{a}\sqrt{\tfrac{\lam}{L}}\right)K_{\frac{2\sqrt2}{a}}\left(\tfrac{2\sqrt2}{a} \sqrt{\tfrac{\lam}{y}}\right) \right], &x < y \\
		K_{\frac{2\sqrt2}{a}}\left(\tfrac{2\sqrt2}{a}\sqrt{\tfrac{\lam}{y}}\right) \left[K_{\frac{2\sqrt2}{a}}\left(\tfrac{2\sqrt2}{a}\sqrt{\tfrac{\lam}{L}}\right)I_{\frac{2\sqrt2}{a}}\left(\tfrac{2\sqrt2}{a} \sqrt{\tfrac{\lam}{x}}\right) - I_{\frac{2\sqrt2}{a}}\left(\tfrac{2\sqrt2}{a}\sqrt{\tfrac{\lam}{L}}\right)K_{\frac{2\sqrt2}{a}}\left(\tfrac{2\sqrt2}{a} \sqrt{\tfrac{\lam}{x}}\right) \right], &y < x
	\end{cases}.
\end{align}
Finally, to get the transition density, we must invert the Laplace transform \eqref{eq:densitylaplace} with $\mt$ \eqref{eq:kneg12-sigma-st-mt-p}. Since $T -t > 0$ and $G_\lam(x,y)$ has no poles, we close the contour in the left half-plane and apply the Jordan's lemma. Then, using the substitution $\lam = -L \rho^2$ to account for the branch cut, we get 
\begin{align}
\Gamt(t,x;T,y)
	&= - \frac{L\mt(y)}{\pi \ii} \int_{0}^{\infty}{\ee^{-L\rho^2 (T-t) }\left(G_{L\rho^2 \ee^{\ii \pi}}(x,y)-G_{L\rho^2 \ee^{-\ii \pi}}(x,y) \right) \rho \dd \rho },\label{eq:Gamma-temp}
\end{align}
the integrand can be simplified as 
\begin{align}
G_{L\rho^2 \ee^{\ii \pi}}(x,y)-G_{L\rho^2 \ee^{-\ii \pi}}(x,y) 
	&= -\frac{\ii \pi^3 \csc^2\left(\tfrac{2\sqrt2 \pi}{a}\right)}{2\left|K\left(\tfrac{2\sqrt2}{a},\ii \frac{2\sqrt2 \rho}{a}\right)\right|^2 x^{\frac{\sqrt2}{a}}y^{\frac{\sqrt2}{a}}} \\
	&\cdot \left(J_{\frac{2\sqrt2}{a}}\left(\tfrac{2\sqrt2 \rho}{a}\right)J_{-\frac{2\sqrt2}{a}}\left(\tfrac{2\sqrt2 \rho}{a} \sqrt{\tfrac{L}{x}}\right) - J_{-\frac{2\sqrt2}{a}}\left(\tfrac{2\sqrt2 \rho}{a}\right)J_{\frac{2\sqrt2}{a}}\left(\tfrac{2\sqrt2 \rho}{a} \sqrt{\tfrac{L}{x}}\right)\right) \\
	&\cdot \left(J_{\frac{2\sqrt2}{a}}\left(\tfrac{2\sqrt2 \rho}{a}\right)J_{-\frac{2\sqrt2}{a}}\left(\tfrac{2\sqrt2 \rho}{a} \sqrt{\tfrac{L}{y}}\right) - J_{-\frac{2\sqrt2}{a}}\left(\tfrac{2\sqrt2 \rho}{a}\right)J_{\frac{2\sqrt2}{a}}\left(\tfrac{2\sqrt2 \rho}{a} \sqrt{\tfrac{L}{y}}\right)\right),
\end{align}
plugging this into \eqref{eq:Gamma-temp} and introducing $\psi(\rho, x)$ \eqref{eq:kneg12-eigenfunctions} results in \eqref{eq:kneg12-Gamma}. Finally, we notice that \eqref{eq:kneg12-eigenfunctions} satisfy 
\begin{align}
\left(\frac{a^2}{2}+a\sqrt2\right)x^2\partial_x \psi(\rho, x) + \frac{a^2}{2}x^3 \partial_x^2 \psi(\rho, x) = -L\rho^2 \psi(\rho, x),
\end{align}
which is \eqref{eq:psi-ode} for $\Act$ \eqref{eq:Act-example} with $k=-1/2$, and since 
\begin{align}
\left(J_{\frac{2\sqrt2}{a}}\left(\tfrac{2\sqrt2 \rho}{a}\right)J_{-\frac{2\sqrt2}{a}}\left(\tfrac{2\sqrt2 \rho}{a} \sqrt{\tfrac{L}{L}}\right) - J_{-\frac{2\sqrt2}{a}}\left(\tfrac{2\sqrt2 \rho}{a}\right)J_{\frac{2\sqrt2}{a}}\left(\tfrac{2\sqrt2 \rho}{a} \sqrt{\tfrac{L}{L}}\right)\right)
	&= 0
\end{align}
$\psi(\rho, x)$ satisfy \eqref{eq:psi-bc-L}. Therefore $\psi(\rho, x)$ are indeed improper eigenfunctions, and from the uniqueness of spectral representation \eqref{eq:Gamma} we conclude that they satisfy the normalization \eqref{eq:normalization-continuous}.

%


\bibliography{references-1.00}


\section*{Figures}

\begin{figure}[h!]
	\includegraphics[width=\textwidth]{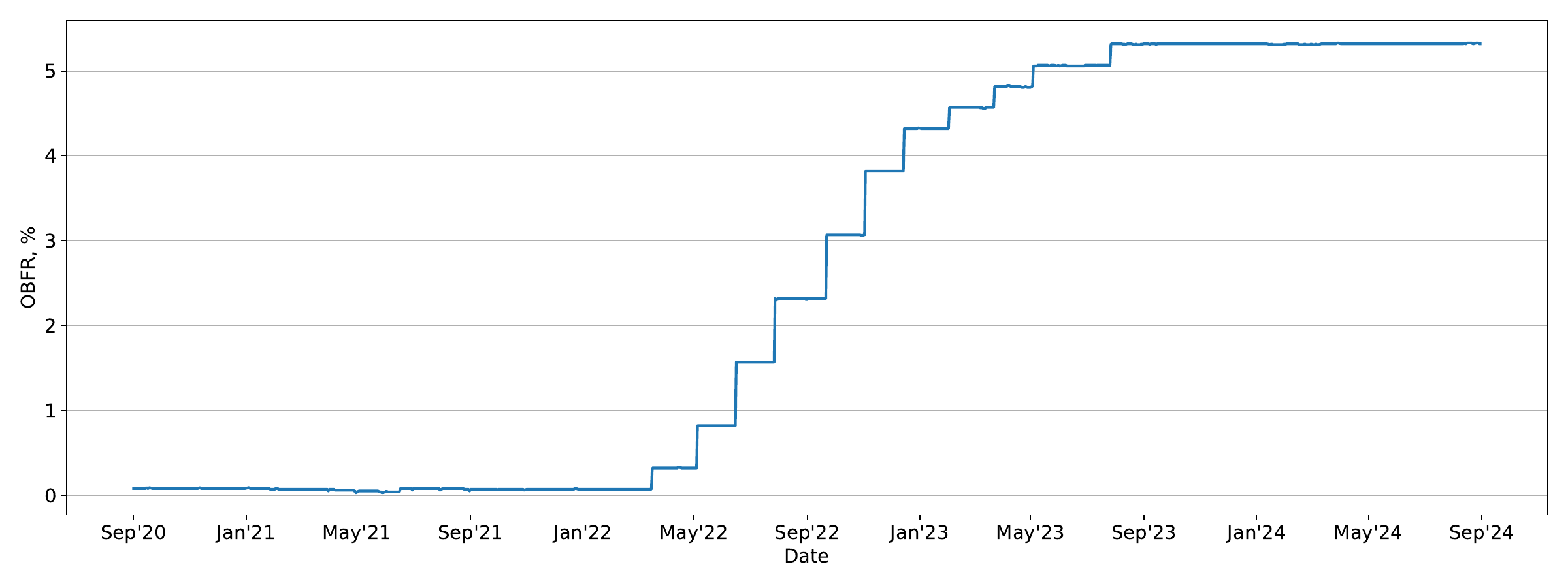}
	\centering 
	\caption{Overnight bank funding rate (OBFR). Source: Federal Reserve Bank of New York\\
	\url{https://www.newyorkfed.org/markets/reference-rates/obfr} }
	\label{fig:OBFR}
\end{figure}

\begin{figure}[h!]
	\begin{tabular}{ccc}
        \includegraphics[width=0.5\textwidth]{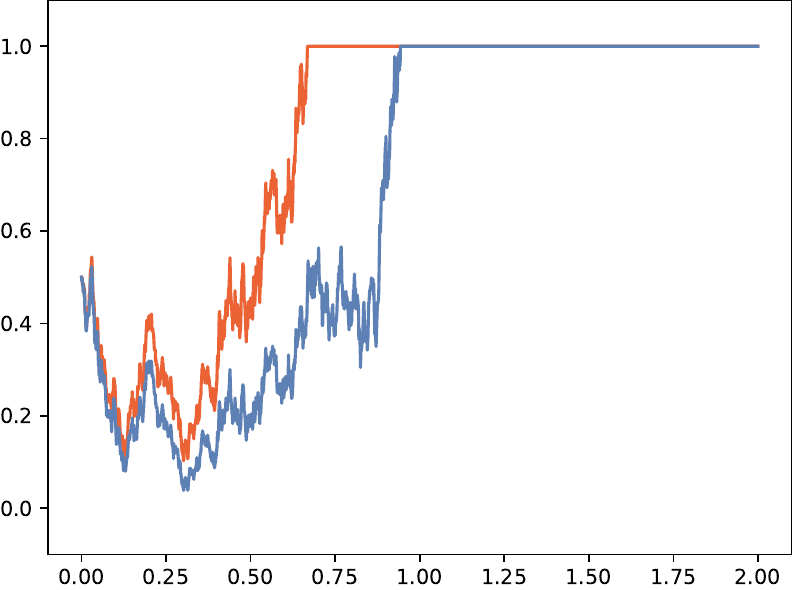} & 
        \includegraphics[width=0.5\textwidth]{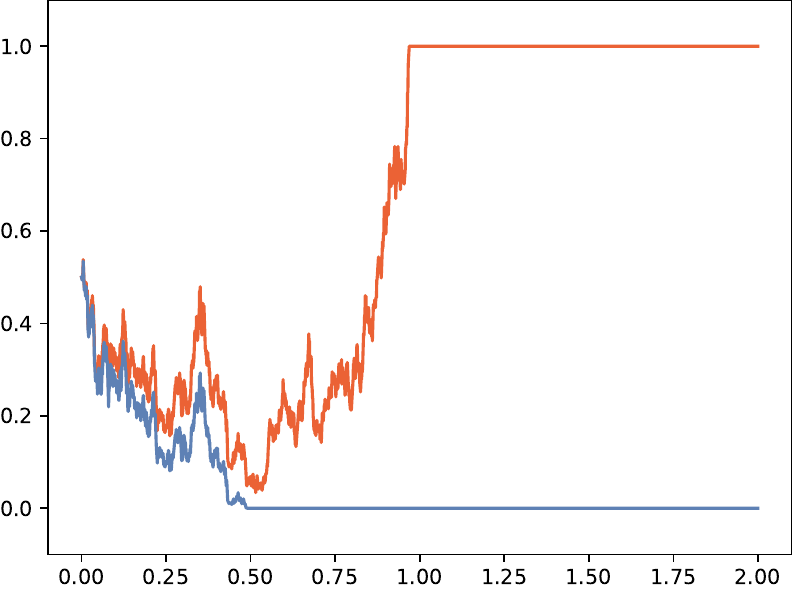} \\
        \includegraphics[width=0.5\textwidth]{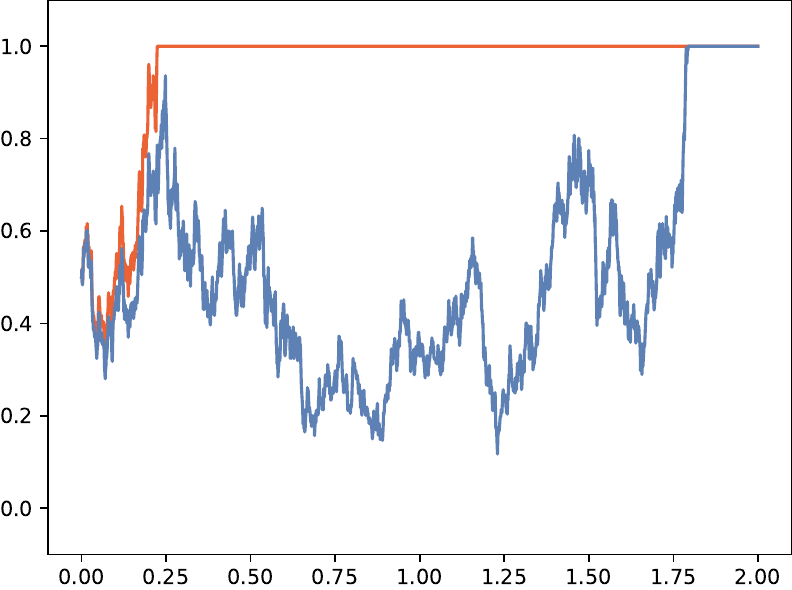} & 
        \includegraphics[width=0.5\textwidth]{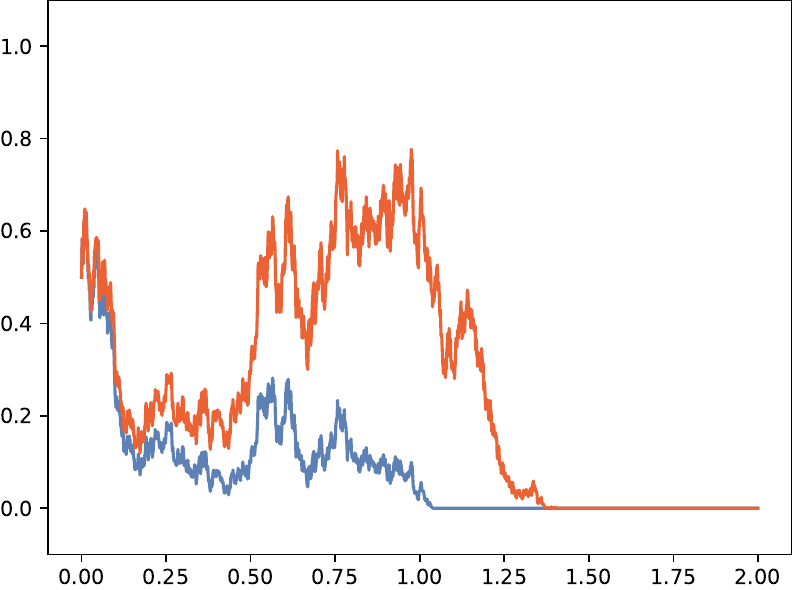} \\
    \end{tabular}
    \centering
    \caption{We plot 4 trajectories for the short-rate $X_t$ in $k=1/2$ model \ref{subsec:k12} with $L=1$, $a=1$, starting from $X_0 = 1/2$, for $t \in [0, 2]$. Blue is under the original probability measure $\Pb$ with dynamics \eqref{eq:k12-SDE}, while red is under the equivalent probability measure $\Pbt$ as introduced in Remark \ref{rem:equivprob} and dynamics \eqref{eq:k12-equivprob-SDE}.}
    \label{fig:k12-trajectories}
\end{figure}

\begin{figure}[h!]
	\includegraphics[width=0.95\textwidth]{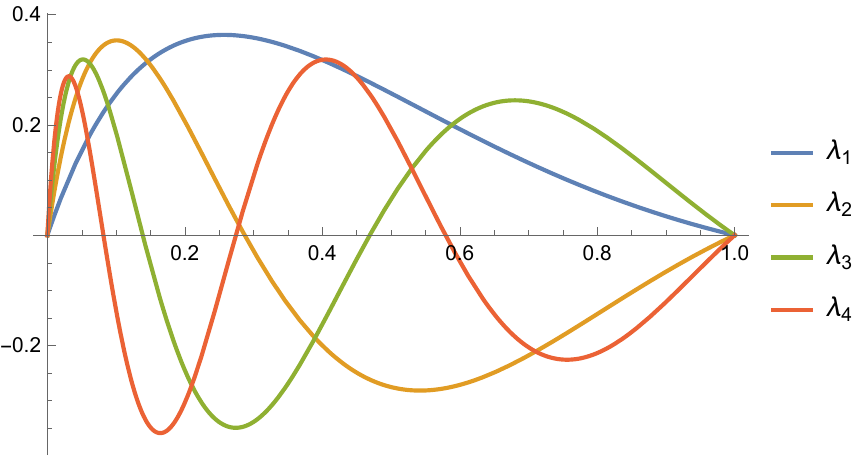}
	\centering 
	\caption{We plot the first four normalized eigenfunctions $\psi_{n}$ \eqref{eq:k12-eigenfunctions} for $L = 1$ and $a=1$, normalized as in \eqref{eq:normalization}.}
	\label{fig:eigenfunctions-12}
\end{figure}

\begin{figure}[h!]
    \centering
    \begin{tabular}{ccc}
        \includegraphics[width=0.33\textwidth]{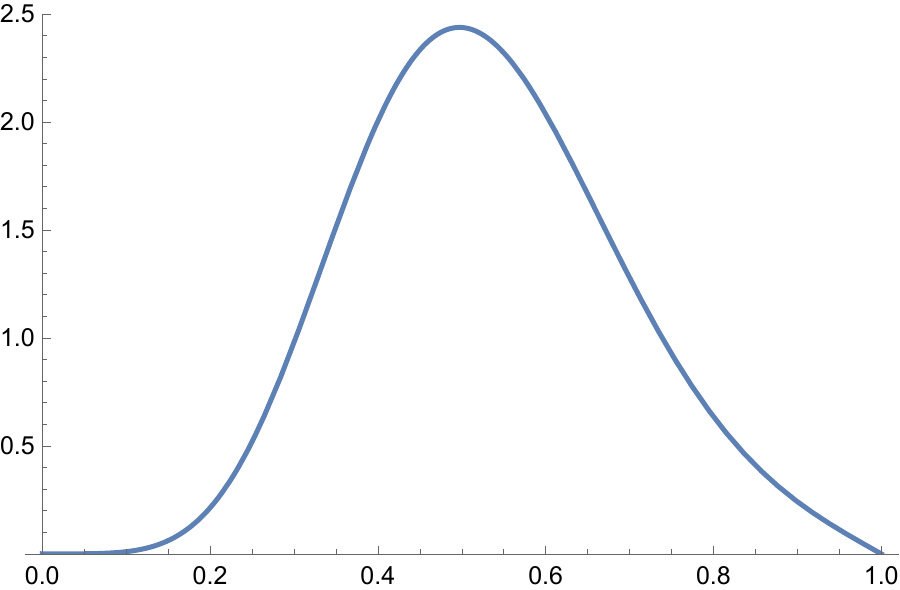} & 
        \includegraphics[width=0.33\textwidth]{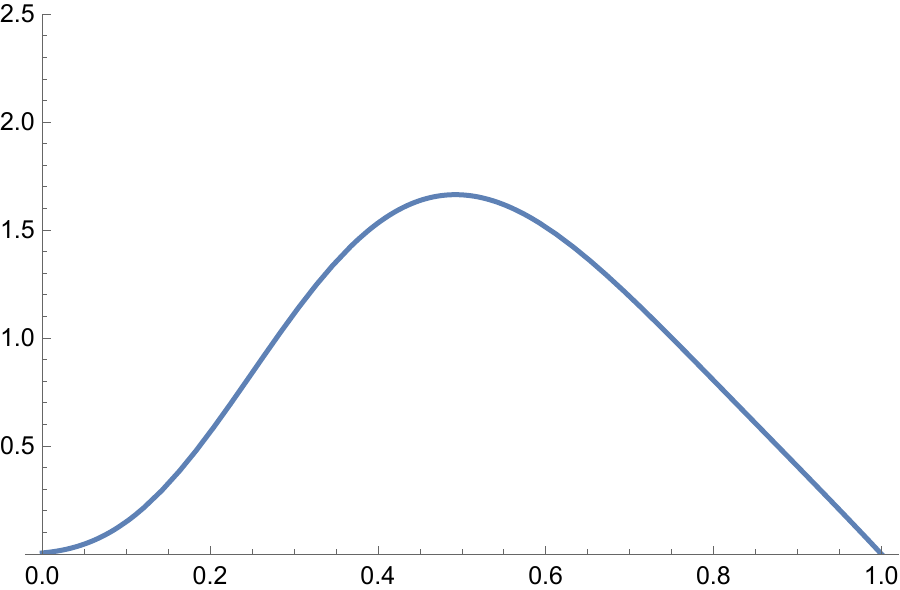} & 
        \includegraphics[width=0.33\textwidth]{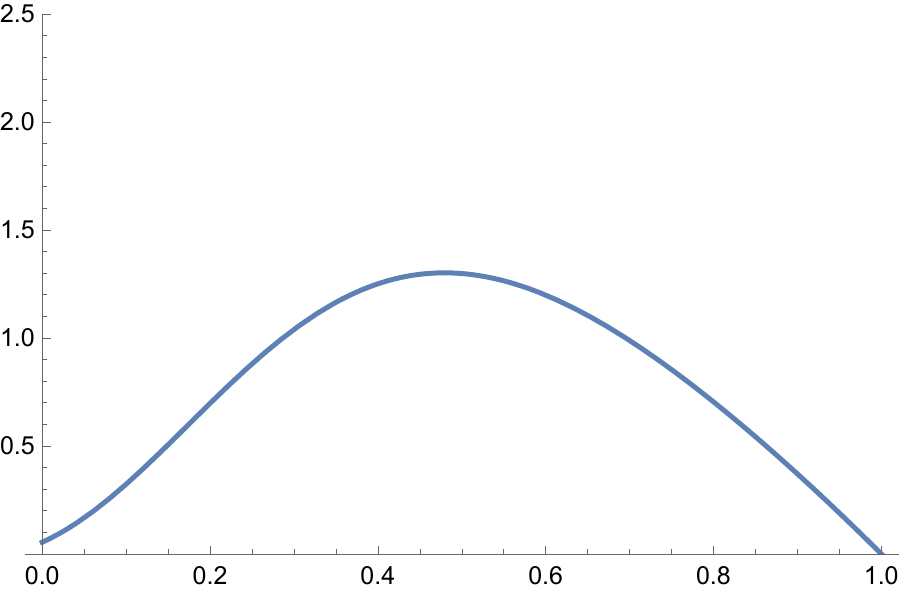} \\
		$T = 0.05$ & $T = 0.10$ & $T = 0.15$ \\ 
        \includegraphics[width=0.33\textwidth]{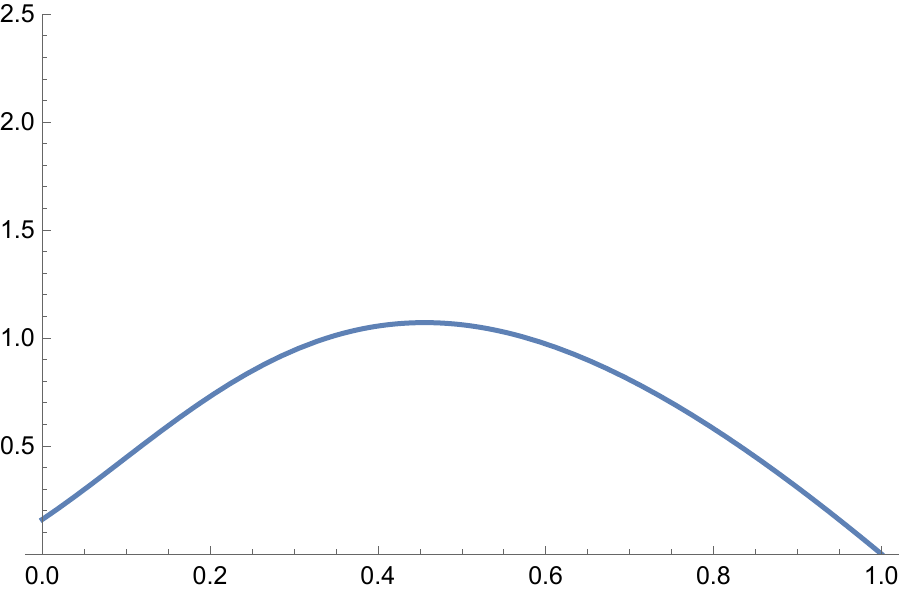} & 
        \includegraphics[width=0.33\textwidth]{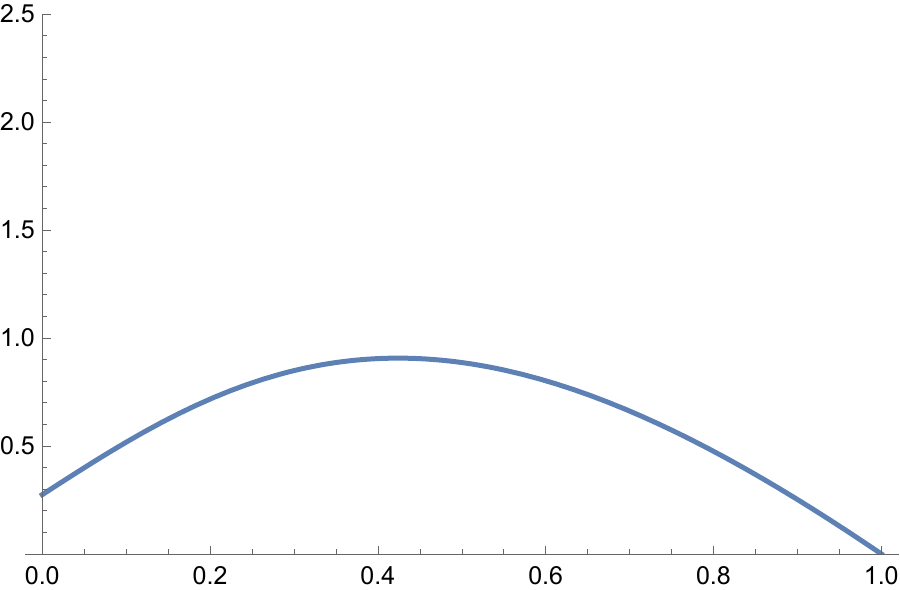} & 
        \includegraphics[width=0.33\textwidth]{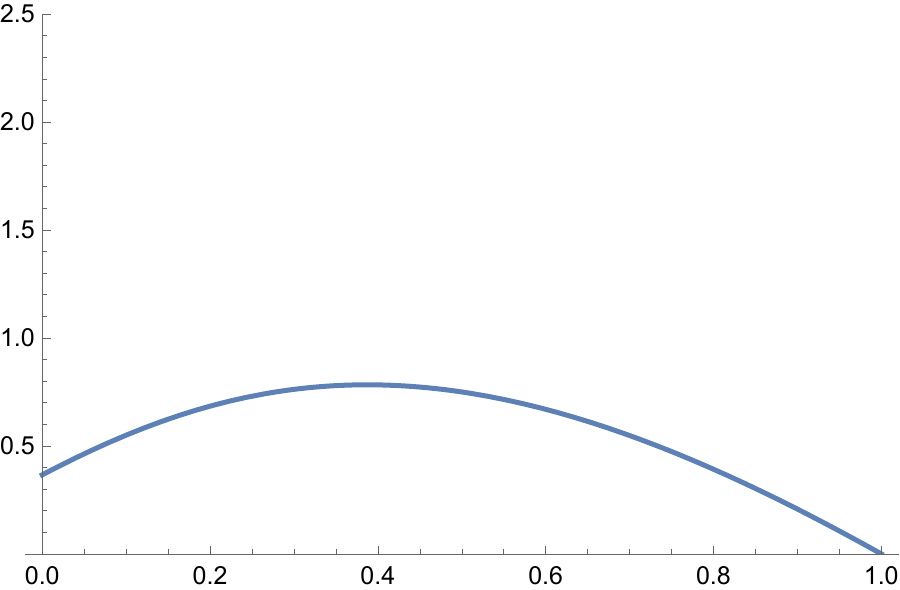} \\
		$T = 0.20$ & $T = 0.25$ & $T = 0.30$ \\ 
    \end{tabular}
    \caption{We plot the transition density $\Gamt(t, x; T, y)$ \eqref{eq:k12-Gamma} for $k=\frac{1}{2}$ model (Section \ref{subsec:k12}) as a function of $y \in [0, 1]$. The fixed parameters are $t=0, x=\frac{1}{2}, L=1, a=1$.}
    \label{fig:gamma}
\end{figure}

\begin{figure}[h!]
		\centering
		\includegraphics[width=0.8\textwidth]{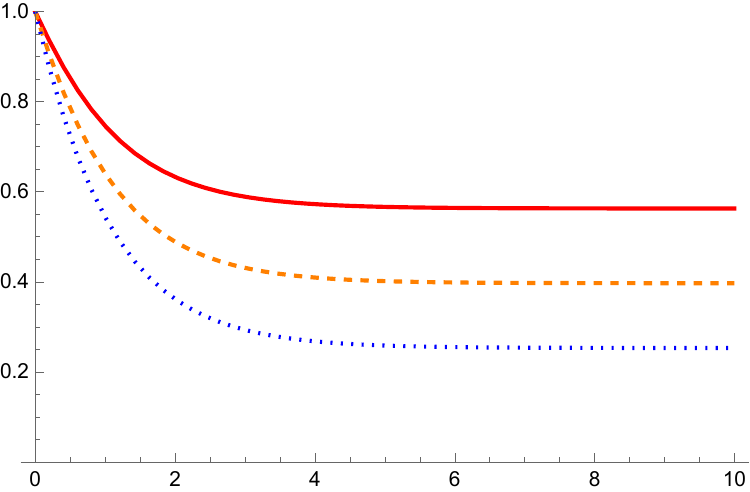}
		\caption{The bond price $B_0^T$ \eqref{eq:k12-bond} for $k=\frac{1}{2}$ model (Section \ref{subsec:k12}). Red solid line corresponds to $x=\frac{1}{3}$, orange dashed line to $x=\frac{1}{2}$, blue dotted line to $x=\frac{2}{3}$. The fixed parameters are $L=1, a=1$.}
		\label{fig:bond}
\end{figure}

\begin{figure}[h!]
		\centering
		\includegraphics[width=0.8\textwidth]{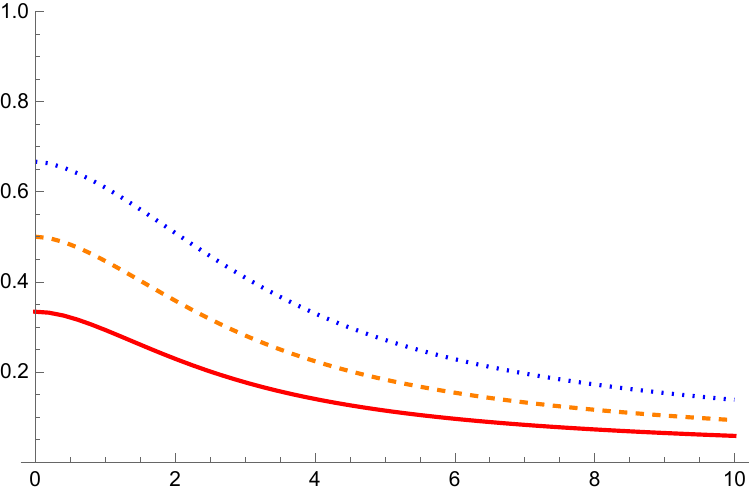}
		\caption{The yield curve $Y_0^T$ \eqref{eq:bond-yield} for $k=\frac{1}{2}$ model (Section \ref{subsec:k12}) with $B_0^T$ as in \eqref{eq:k12-bond}. Red solid line corresponds to $x=\frac{1}{3}$, orange dashed line to $x=\frac{1}{2}$, blue dotted line to $x=\frac{2}{3}$. The fixed parameters are $L=1, a=1$. }
		\label{fig:yield}
\end{figure}

\begin{figure}[h!]
	\begin{tabular}{cc}
		\includegraphics[width=0.5\textwidth]{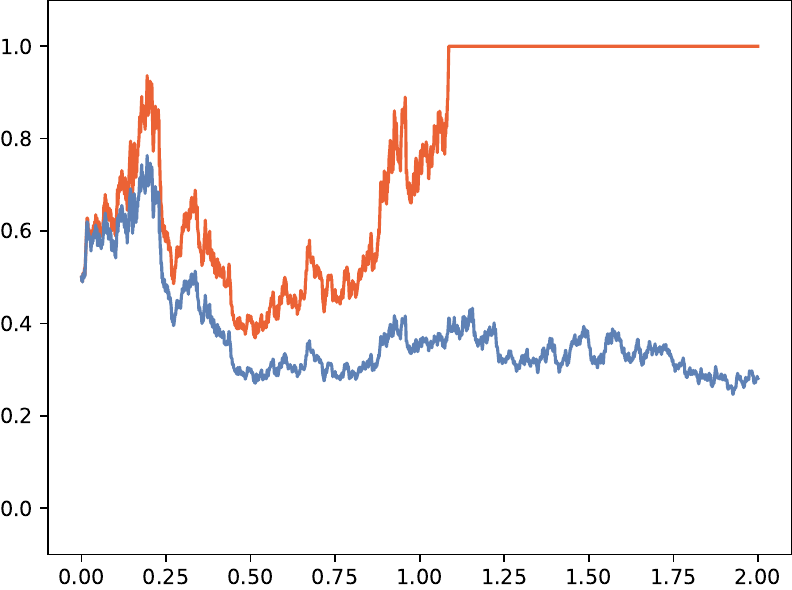} & 
		\includegraphics[width=0.5\textwidth]{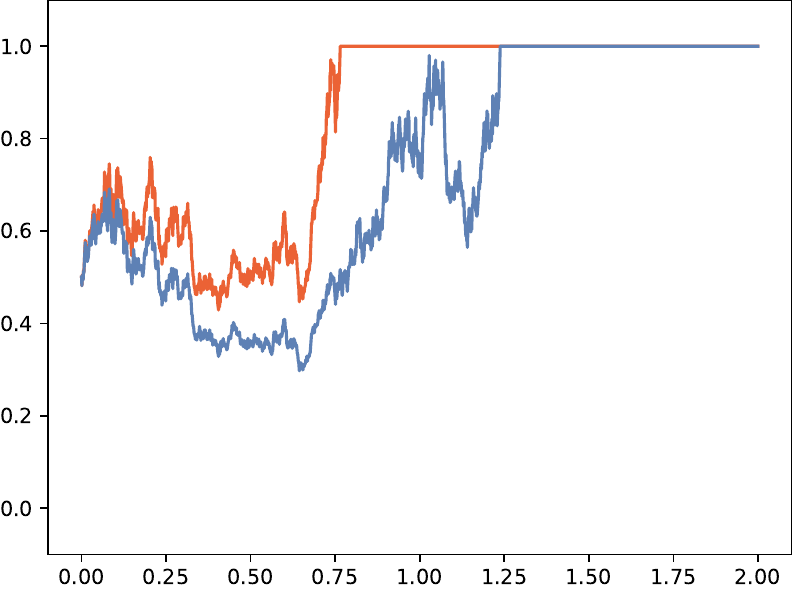} \\
		\includegraphics[width=0.5\textwidth]{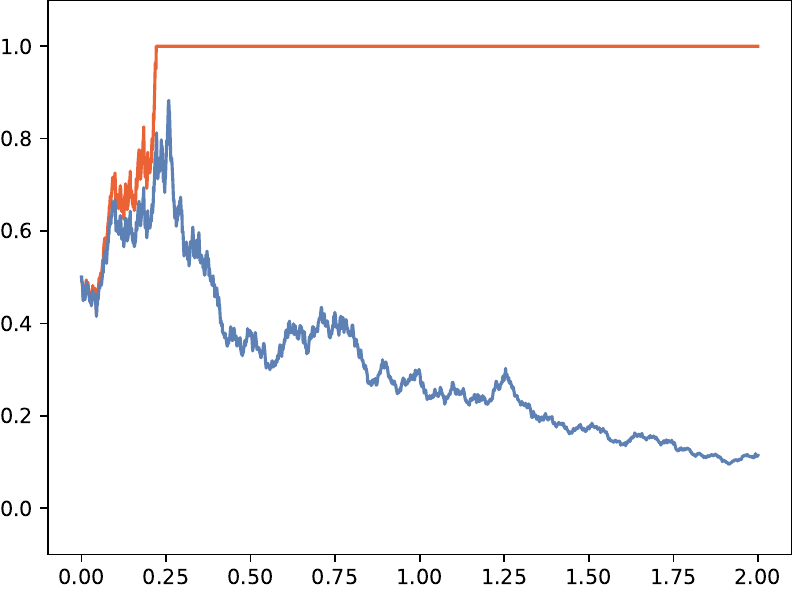} & 
		\includegraphics[width=0.5\textwidth]{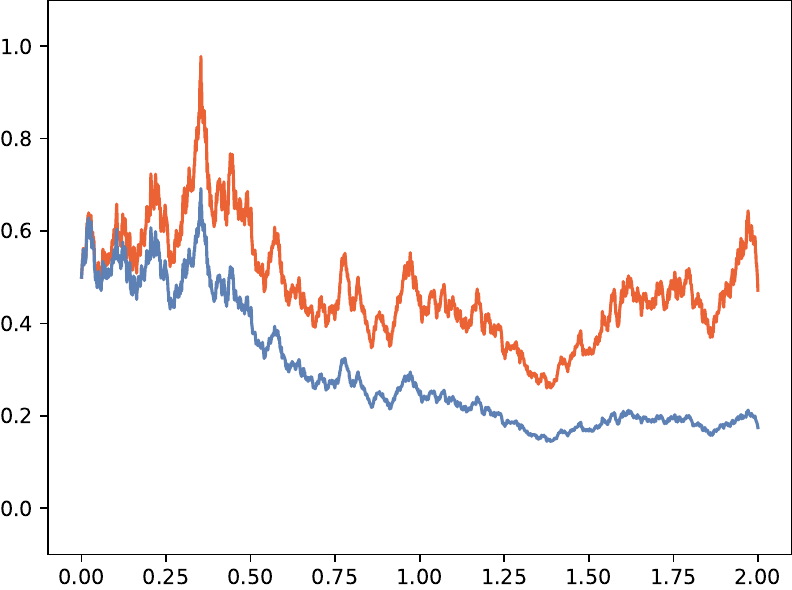} \\
	\end{tabular}
	\centering
	\caption{We plot 4 trajectories for the short-rate $X_t$ in $k=-1/2$ model \ref{subsec:kneg12} with $L=1$, $a=1$, starting from $X_0 = 1/2$, for $t \in [0, 2]$. Blue is under the original probability measure $\Pb$ with dynamics \eqref{eq:kneg12-SDE}, while red is under the equivalent probability measure $\Pbt$ as introduced in Remark \ref{rem:equivprob} and dynamics \eqref{eq:kneg12-equivprob-SDE}.}
    \label{fig:kneg12-trajectories}
\end{figure}

\begin{figure}[h!]
	\includegraphics[width=0.95\textwidth]{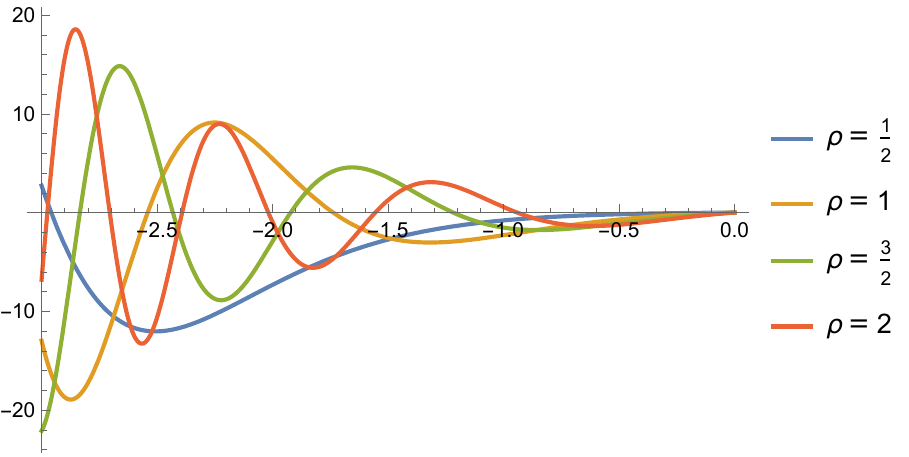}
	\centering 
	\caption{We plot four improper normalized eigenfunctions $\psi(\rho, \ee^z)$ \eqref{eq:kneg12-eigenfunctions} for $k=-1/2$ model with $L =1$, $a = 1$, and $z \in [-3, 0]$ corresponding to $x \in [0.05, 1]$.}
	\label{fig:kneg12-eigenfunctions}
\end{figure}

\begin{figure}[h!]
    \centering
    \begin{tabular}{ccc}
        \includegraphics[width=0.33\textwidth]{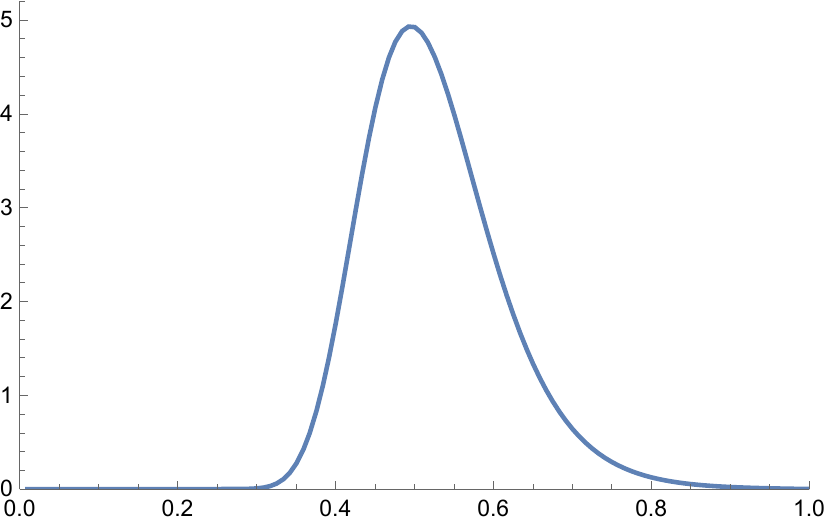} & 
        \includegraphics[width=0.33\textwidth]{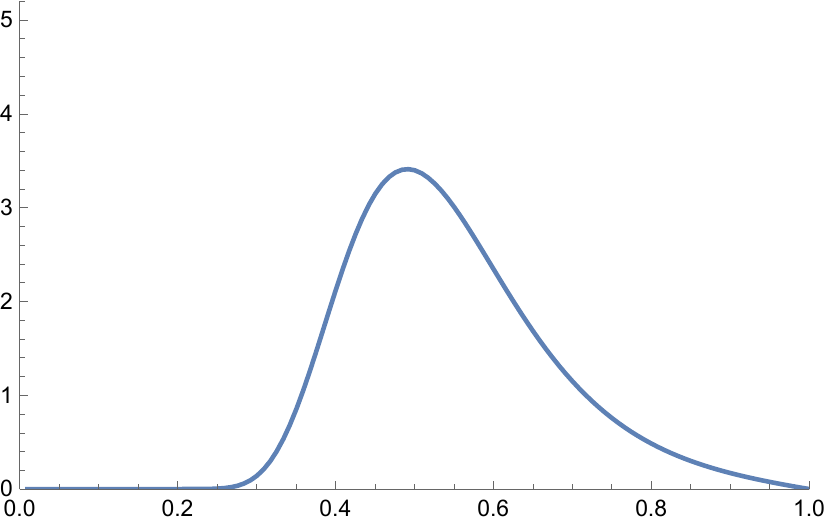} & 
        \includegraphics[width=0.33\textwidth]{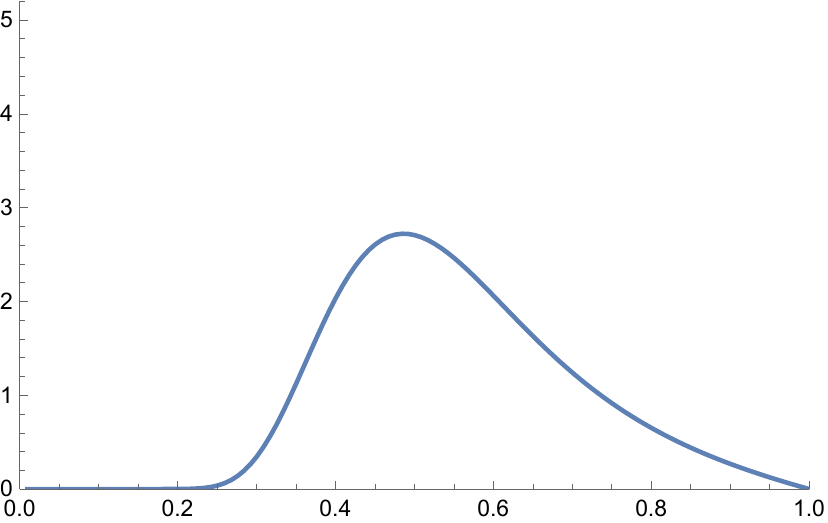} \\
		$T = 0.05$ & $T = 0.10$ & $T = 0.15$ \\ 
        \includegraphics[width=0.33\textwidth]{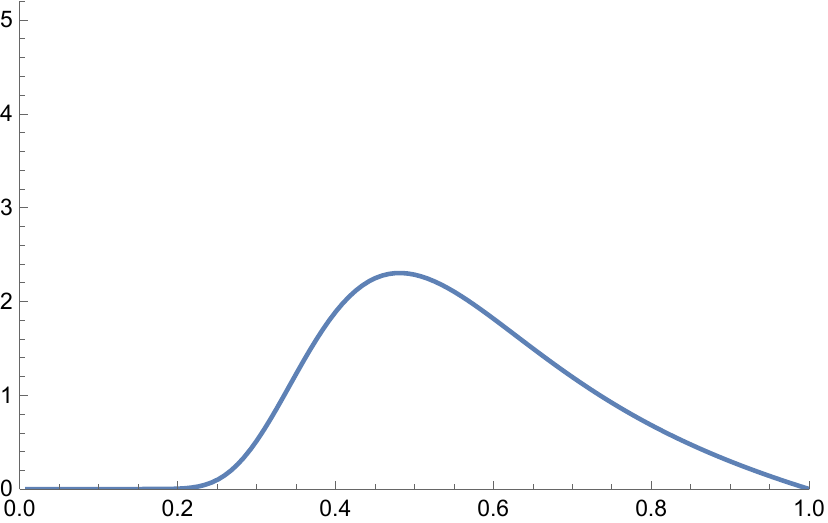} & 
        \includegraphics[width=0.33\textwidth]{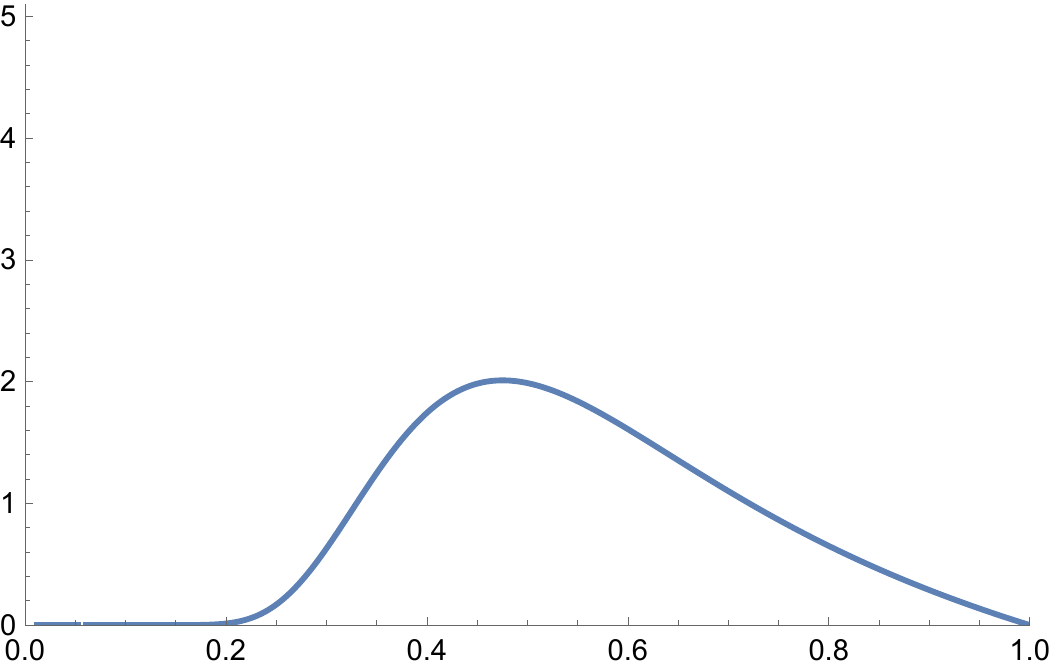} & 
        \includegraphics[width=0.33\textwidth]{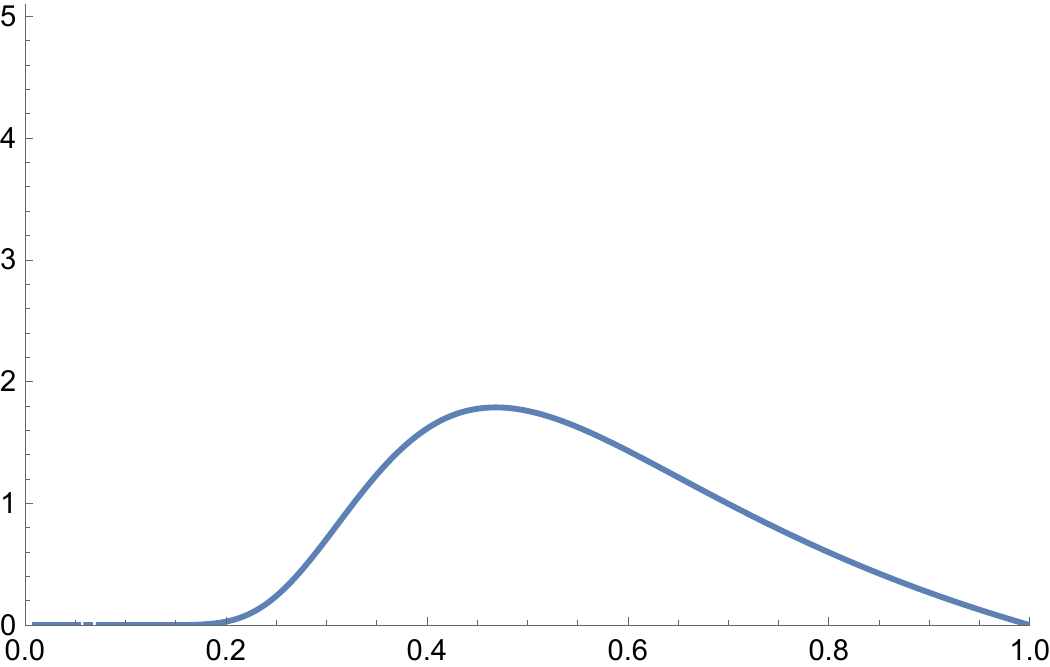} \\
		$T = 0.20$ & $T = 0.25$ & $T = 0.30$ \\ 
    \end{tabular}
    \caption{We plot the transition density $\Gamt(t, x; T, y)$ \eqref{eq:kneg12-Gamma} for $k=-\frac{1}{2}$ model (Section \ref{subsec:kneg12}) as a function of $y \in [0, 1]$ for different values of $T$. The fixed parameters are $t = 0, x = \frac{1}{2}, L = 1, a = 1$.}
    \label{fig:gamma2}
\end{figure}

\begin{figure}[h!]
	\centering
	\includegraphics[width=0.8\textwidth]{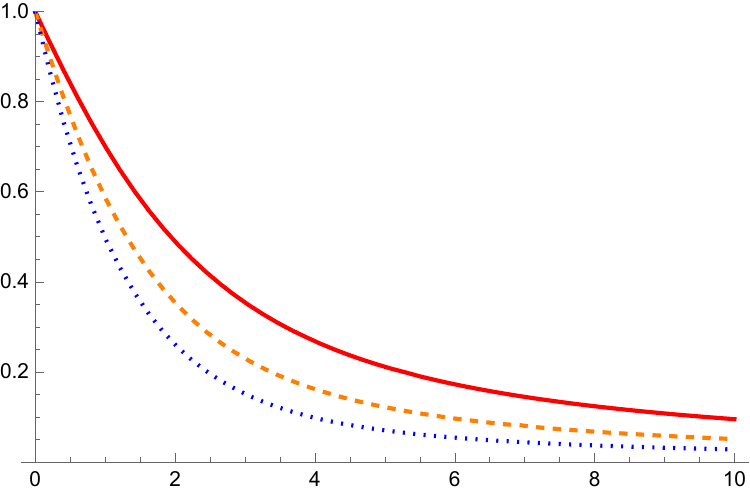}
	\caption{The bond price $B_0^T$ \eqref{eq:kneg12-bond} for $k=-\frac{1}{2}$ model (Section \ref{subsec:kneg12}). Red solid line corresponds to $x=\frac{1}{3}$, orange dashed line to $x=\frac{1}{2}$, blue dotted line to $x=\frac{2}{3}$. The fixed parameters are $L=1, a=1$.}
	\label{fig:bond2}
\end{figure}

\begin{figure}[h!]
	\centering
	\includegraphics[width=0.8\textwidth]{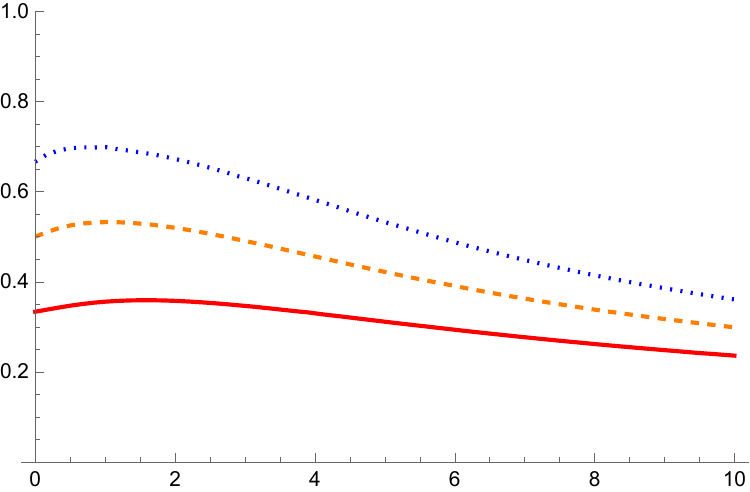}
	\caption{The yield curve $Y_0^T$ \eqref{eq:bond-yield} for $k=-\frac{1}{2}$ model (Section \ref{subsec:kneg12}) with $B_0^T$ as in \eqref{eq:kneg12-bond}. Red solid line corresponds to $x=\frac{1}{3}$, orange dashed line to $x=\frac{1}{2}$, blue dotted line to $x=\frac{2}{3}$. The fixed parameters are $L=1, a=1$.}
	\label{fig:yield2}
\end{figure}

\end{document}